\documentclass{amsart}
\usepackage{latexsym,amsfonts}
\usepackage{tikz}
\usetikzlibrary{calc,shapes,arrows,automata}
\usepackage{amsmath}
\usepackage{algorithm}
\usepackage{algorithmic}
\usepackage{url}
\usepackage{subcaption} 
\usepackage{caption}
\usepackage{centernot}
\usepackage{svg}
\usepackage{epsfig}

\usepackage{thmtools}
\usepackage{graphicx} 
\newtheorem{theorem}{Theorem}
\newtheorem{lemma}{Lemma}
\newtheorem{definition}{Definition}[section]
\usepackage{bm}
\usepackage{pifont}

\newcommand{\Reals}{\mathbb{R}}
\newcommand{\Nats}{\mathbb{N}}
\newcommand{\Realpos}{\mathbb{R}}
\newcommand{\system}{\mathcal{S}}
\newcommand{\states}{\mathcal{X}}
\newcommand{\initstates}{\mathcal{X}_{0}}
\newcommand{\secretstates}{\mathcal{X}_{s}}

\newcommand{\transfunc}{f}

\newcommand{\transclos}{\mathcal{T}}
\newcommand{\barrier}{\mathcal{B}}
\newcommand{\lab}{\mathcal{L}}
\newcommand{\AP}{\mathsf{AP}}
\newcommand{\Traces}{\mathcal{T}}
\newcommand{\until}{\mathsf{U}}
\newcommand{\release}{\mathsf{R}}
\newcommand{\nex}{\mathsf{X}}
\newcommand{\always}{\mathsf{G}}
\newcommand{\eventually}{\mathsf{F}}
\newcommand{\Aut}{\mathcal{A}}
\newcommand{\Lang}{\mathsf{L}}
\newcommand{\Inf}{\text{Inf}}
\newcommand{\prodsys}{\bm{\system}}
\newcommand{\prodstate}{\bm{\states}}
\newcommand{\prodinitstate}{ \bm{ \mathcal{X}}_0 }
\newcommand{\prodtrans}{\bm{\transfunc}}
\newcommand{\InPath}{\text{InPath}}
\newcommand{\indexforall}{\mathfrak{F}}
\newcommand{\indexexists}{\mathfrak{E}}

\newtheorem{problem}{Problem}

\newcommand{\quant}{\mathsf{Q}}
\newcommand{\xmark}{\ding{55}}

\usepackage{tikz}
\usetikzlibrary{automata, positioning}

\title[HyperCertificates]{HyperCertificates: \\ Verification of Discrete-time Dynamical Systems against HyperLTL Specifications}
\author{Vishnu Murali, Amin Falah, 
Ashutosh Trivedi, and
Majid Zamani
}
\thanks{
This work was supported by NSF under grants CNS-2111688 and CNS-2145184.} 
\thanks{ 
 V. Murali, A. Falah, A. Trivedi and M. Zamani are with the Department of Computer Science, University of Colorado Boulder, CO, USA. Emails: \{vishnu.murali,amin.falah,ashutosh.trivedi,majid.zamani\}@colorado.edu.}
\begin{document}
\begin{abstract}
We introduce a functional inductive framework to verify discrete-time
dynamical systems against hyperproperties specified as Hyperlinear
temporal logic formulae via a notion of HyperCertificates. Unlike
linear temporal logic (LTL) formulae which are concerned with
individual traces of a system, hyperproperties are properties that are
concerned with how the traces of a system relate to one another.
HyperLTL is an extension of LTL for hyperproperties, and is useful to
describe specifications such as opacity, privacy as well as notions of
robustness. Our notion of HyperCertificates consists of a pair of functions, where the first models the lookahead, and the second relies on a combination of barrier and ranking functions. We use closure certificates, to act as a model for this lookahead and then rely on barrier and ranking function arguments modulo this lookahead to provide guarantees against HyperLTL formulae. We demonstrate how our approach is automatable via existing techniques such as sum-of-squares	optimization (SOS) and satisfiability modulo theories (SMT) solvers. Finally, we demonstrate our approach on some case studies.
\end{abstract}

\maketitle

\maketitle

\section{Introduction} \label{sec:intro}
Temporal logic properties, characterized in Linear Temporal Logic (LTL) \cite{pnueli_1977_temporal} or as $\omega$-automata \cite{vardi_2005_automata}, typically capture properties over individual traces of a system.
Examples of such properties include safety, where one ensures that a system does not reach some unsafe states, and reachability, where one ensures that some ``good'' states are eventually reached from some initial set of states.
To reason about the satisfaction of such properties, one may reason about individual traces of a system independent of one another.
Unfortunately, such temporal formulae cannot capture many complex properties that describe how traces of a system \emph{relate} to each other.
Such properties include information flow properties concerned with plausible deniability and security, such as opacity \cite{Bryan_2006_opacity} and noninterference \cite{Goguen_Security_nonint_1982}. 
Similarly, the notions of optimality and robustness \cite{wang2020hyperproperties}, which are useful in the planning of robotic tasks, require reasoning about how the traces of a system relate to each other.
To tackle this issue, the authors of \cite{clarkson_hyperproperties_2010} introduced the notion of hyperproperties and characterized them using temporal hyperlogics \cite{clarkson_temporal_2014} (analogs of classical temporal logics) such as HyperLTL.
This paper considers the problem of verifying discrete-time dynamical systems against HyperLTL formulae via a notion of HyperCertificates.

\noindent \textbf{Verification of Hyperproperties.} 
Model checking algorithms for  hyperlogics have been well studied \cite{finkbeiner2015algorithms,finkbeiner2021model} for  finite-state systems.
However, the verification of infinite-state systems against such properties remains an open problem.
The authors of \cite{finkbeiner2023automata} consider a model checking procedure where they overapproximate an infinite-state system with a finite system and prune this overapproximation via a notion of $k$-step infeasibility. They then verify this property by allowing the existential player to select a strategy based on a $k$-step lookahead where $k$ is a user-defined fixed parameter.
However, the implementation of the model checking algorithm is left as future work.
The authors of \cite{correnson2025coinductive} consider coinductive proof rules for the deductive verification of infinite-state programs against hyperproperties in the $\forall^{*} \exists^{*}$-fragment of HyperLTL.
Unfortunately, these rules are based on a \emph{manual} proof in the Coq interactive theorem prover \cite{Coq-refman} and are not easily automatable.

\noindent \textbf{Augmented Barrier Certificates.} The authors of \cite{anand2024verification} consider an approach to verify uncertain dynamical systems against HyperLTL formulae via the notion of augmented barrier certificates.
Although this is automated, the approach in general is conservative (cf. Section \ref{sec:abc_hyperLTL}) and only works for some safety specifications.
We now elaborate on some of these drawbacks due to the absence of foresight. 
Typically, verifying HyperLTL formulae can be thought of as a two-player game between the universal ($\forall$) and the existential ($\exists$) player, respectively. 
Here, the $\forall$-player selects traces to falsify the property, whereas the $\exists$-player attempts to satisfy it by choosing  corresponding counter-traces.
If the $\exists$-player wins, then the system satisfies the desired property.
As discrete-time dynamical systems are typically continuous-state, one often fails to find a finite representation of the traces of the system (that is bisimilar), and so one cannot directly consider this trace-based game.
Thus, augmented barrier certificates model this two-player game in a turn-based fashion rather than a trace-based fashion.
Here,  the $\forall$-player attempts to falsify the property by selecting the immediate next state in the universal trace, and the $\exists$-player selects the next state in the counter-trace.
In this framework, the strategy of the $\exists$-player is restricted to reasoning about only the immediate next step, without access to possible future choices of the $\forall$-player.
To illustrate this limitation, consider the hyperproperty of the initial-state opacity \cite{lafortune2018history}.
A system is said to be initial-state opaque if, for every trace that originates from a secret initial state, there exists a trace from a non-secret initial state that produces the same sequence of observable outputs. We use this property as a running example to highlight the drawbacks of augmented barrier certificates.

\begin{figure}[h]
   \begin{tikzpicture}
    \begin{scope}[xshift=0cm] 
    \node[initial, state, initial text=, fill=blue!40!white] at (0,0) (0) {$x_0$};
    \node[below=0.05cm of 0] {$s$};
     \node[state,fill=blue!40!white ] at (2,1) (1) {$x_1$};
     \node[state,fill=blue!40!white ] at (4,1) (2) {$x_2$};
     \node[state,fill=blue!40!white ] at (2,-1) (3) {$x_3$};
     \node[state,fill=red!40!white ] at (4,-1) (4) {$x_4$};
    \node[initial,initial text=, state,fill=blue!40!white ] at (0,-3) (5) {$x_5$};
    \node[below=0.05cm of 5] {$\neg s$};
    \node[state,fill=blue!40!white ] at (2,-3) (6) {$x_6$};
     \node[state,fill=blue!40!white ] at (4,-3) (7) {$x_7$};
     
     \node[initial,initial text=, state,fill=blue!40!white ] at (0,-5) (8) {$x_8$};
     \node[below=0.05cm of 8] {$\neg s$};
     \node[state,fill=blue!40!white ] at (2,-5) (9) {$x_9$};
     \node[state,fill=red!40!white ] at (4,-5) (10) {$x_{10}$};

     \path[->]
     (0)  edge node{} (1)
     (0)  edge node{} (3)
     (1)  edge node{} (2)
     (3)  edge node{} (4)
     (2) edge[loop right] node{} (2)
     (4) edge[loop right] node{} (4)
     
     (5)  edge node{} (6)
     (8)  edge node{} (9)
     (6)  edge node{} (7)
     (9)  edge node{} (10)
     (7) edge[loop right] node{} (7)
     (10) edge[loop right] node{} (10);
    \end{scope}
   \end{tikzpicture}

  \caption{An example illustrating the limitations of augmented barrier certificates in verifying opacity. The discrete time system demonstrates a case where one fails to find an augmented barrier to establish opacity, even though the system is opaque.}
 \label{fig:abc_fails}
\end{figure}

Consider a non-deterministic finite state system in Figure~\ref{fig:abc_fails} where the color of each state is the observed output.
Here, the initial secret state $x_0$ may transition to either $x_1$ or $x_3$, resulting in output sequences $(\text{blue}, \text{blue}, \text{blue}, \text{blue}^{\omega} )$ (an infinite sequence of blue) or $(\text{blue}, \text{blue}, \text{red}, \text{red}^{\omega})$, respectively. 
Observe that this system is opaque, as we can choose to start from either $x_5$ or $x_8$ (initial non-secret states) to match these output sequences. 
Unfortunately, under the game-based interpretation of augmented barrier certificates, the $\forall$-player may choose either trace from $x_0$, and without foresight into that choice, the $\exists$-player cannot guarantee a correct counter-trace. 
This results in a failure of the augmented barrier certificates, which operate locally and cannot reason over the future behavior of the opposing trace.

\noindent \textbf{Closure Certificates.} To overcome this, we introduce a \emph{lookahead variable} that encodes a future state that could be visited by the $\forall$-player.
This lookahead augments the barrier function, allowing the $\exists$-player to reason about the outcome of the future choice of the $\forall$-player and thus helps in the selection of a matching counter-trace. 
In the example above, if the trace from $x_0$ leads to $x_4$, the lookahead facilitates choosing $x_8$ as a suitable initial non-secret state, thus restoring the ability to certify opacity.
To determine which lookahead states are relevant, we leverage \emph{closure certificates} \cite{murali2024closure}, which provide a sound overapproximation of the system’s reachable transitions.
By induction, a closure certificate  overapproximates reachable pairs of states, enabling us to compactly represent the reachability relation across traces. This contrasts with barrier certificates, which overapproximate reachable \emph{states} but do not encode pairwise reachability between states across steps
(cf. Section \ref{subsec:ac_opacity_1}).
By combining the notions of closure certificates for reasoning about reachability across traces and barrier certificates for state-based separation, we arrive at a more expressive framework for verifying hyperproperties. 
We summarize our main contributions below. 

\noindent \textbf{Contributions.}
(i) We present a notion of HyperCertificates to verify systems against HyperLTL formulae. 
(ii) We demonstrate how to effectively automate the search for these certificates using many of the standard techniques that have been used to verify the $\omega$-regular properties of interest. (iii) We compare and and contrast our approach with the capabilities of existing approaches through some case studies.

\noindent \textbf{Related works.} 
The authors of \cite{clarkson_hyperproperties_2010} introduced the notion of hyperproperties to capture properties that describe how the traces of a system relate to each other. 
These properties have been captured through hyperlogics such as Hyper linear temporal logic (HyperLTL), hyper computation tree logic (HyperCTL) \cite{clarkson_temporal_2014}, and hyper signal temporal logic (HyperSTL) \cite{nguyen2017hyperproperties}. 
Examples of such properties include ensuring the secrecy of sensitive information \cite{zdancewic2003observational,alur2006preserving,berard2015probabilistic} such as plausible deniability via opacity \cite{berard2015probabilistic,jacob2016overview,lafortune2018history}, as well as for other objectives based on robustness and optimization \cite{wang2020hyperproperties}.
The results in \cite{finkbeiner2015algorithms} consider algorithms for model checking finite-state systems against HyperLTL and HyperCTL properties.
The results in \cite{finkbeiner2019monitoring} consider 
the verification and monitoring of the hyperproperties at runtime.
For the verification of infinite state systems, the results in \cite{finkbeiner2023automata} consider the abstraction of the original system to a program automaton and provide a guaranty with respect to this abstraction.
The results in \cite{correnson2025coinductive} consider the use of coinductive proof rules to verify hyperliveness properties, although such proofs are not automated.
Inspired by the use of barrier certificates \cite{prajna_2004_safety} as an automated search for invariants, the authors of \cite{liu2020verification} consider the notion of augmented barrier certificates to ensure opacity.
Based on this, the results in \cite{anand2024verification} consider an extension of augmented barrier certificates to verify HyperLTL properties.
Analogously to barrier certificates, the results in \cite{murali2024closure} consider a notion of closure certificates to automate the search for transition invariants \cite{podelski_2004_transition} of real-valued dynamical systems.
Inspired by this, we introduce HyperCertificates that combine barrier and closure certificates to verify HyperLTL properties.
We summarize a comparison of our work with the existing results in Table \ref{tab:comparison}.

\begin{table}
\resizebox{1.3\textwidth}{!}{
\begin{tabular}{|c|c|c|c|c|c|c|}
    \hline
         Approach&HyperLTL-fragment&System size&Automation of Proof& Lookahead&Distance to lookahead&Necessity\\
        \hline
         Our work (HyperCertificates)&All& Infinite&Automated&\checkmark&arbitrary&\xmark\\
         \hline 
         PLDI'25 \cite{correnson2025coinductive}&$\forall^* \exists^*$&  Infinite&Manual&\checkmark&arbitrary&\checkmark$^{\Diamond}$\\
         \hline 
        TAC'24 \cite{anand2024verification}&All&  Infinite&Automated&\xmark&$0$&\xmark\\
        \hline 
           NFM'23 \cite{finkbeiner2023automata}&$\forall^* \exists^*$&  Infinite&Automated&\checkmark&fixed length $k$&\xmark\\
        \hline       
CAV'15 \cite{finkbeiner2015algorithms}&All&Finite&Automated&\checkmark&arbitrary&\checkmark\\
\hline
    \end{tabular}
}
    \caption{A table comparing our work with some of the existing results in the verification of HyperLTL. The results of \cite{correnson2025coinductive} are not necessary for $\omega$-lookaheads but are necessary when considering finite lookaheads.}
    \label{tab:comparison}
\end{table}

\section{Preliminaries}
\label{sec:prelims}
\subsection{Notation}
We use $\Nats$ and $\Reals$ to denote the set of natural numbers (including zero) and reals, respectively. 
Given an element $a \in \Reals$, we use $\Reals_{\geq a}$ (resp. $\Nats_{\geq a}$) and $\Reals_{ > a}$  to denote the sets $[a, \infty[$ and $]a,\infty[$, respectively.
Given a set $A$, we use $A^*$ to denote the set of finite-length sequences of elements from $A$, and $A^{\omega}$ to denote the set of countably infinite-length sequences, and let $A^{\infty} = A^* \cup A^{\omega}$ denote the set of all sequences (both finite and countably infinite).
We use $s = \langle a_0, a_1, \ldots, a_n \rangle$ to denote a finite sequence and $s = \langle a_0, a_1, \ldots \rangle$ to denote an infinite sequence.
Given a sequence $s = \langle a_0, a_1, \ldots \rangle \in A^{\infty} $ and two indices $i \in \Nats$, $j \in \Nats \cup \{\infty\}$, we define the sequence $s[i,j] = \langle a_i, a_{i+1}, \ldots, a_{j} \rangle$ if $j \in \Nats$ and $s[i,j] = \langle a_i, a_{i+1}, \ldots \rangle$ otherwise.
Similarly, given a sequence $s = \langle a_0, a_1, \ldots \rangle \in A^{\infty} $ and an index $i \in \Nats$, the term $s[i]$ represents the element $a_i \in A$.
Given an infinite sequence $s = \langle a_0, a_1, \ldots \rangle$, we use $\Inf(s)$ to denote the set of elements that occur infinitely often, \textit{i.e.}, we have $a \in \Inf(s)$, if for all $i \in \Nats$, there exists $j > i$ such that $s[j] = a$.
Given a relation $R \subseteq A \times B$ and an element $a \in A$, we use $R(a)$ to denote the set $\{ b \mid (a,b) \in R \}$. 
Given a set $A$ and a natural number $m$, we use $A^m$ to denote the $m$-fold product of $A$ by itself.
We use $\bm{a}  =(a_1, \ldots , a_m) \in A^m $ to denote a vector of elements in $A$.

 \subsection{Discrete-Time Systems}
A discrete-time system (simply a system) is a tuple of the form $ \system = (\states, \initstates, \transfunc) $, where $ \states $ denotes the set of states, $ \initstates \subset \states $ is the set of initial states and $ \transfunc: \states \rightarrow 2^{\states} $ is the transition map.
The evolution of the system is described by the transition relation $f$.
A \emph{state sequence} is defined as an infinite sequence $ \langle x_0, x_1, x_2, \ldots \rangle \in \states^{\omega} $, where $ x_{i+1} \in \transfunc(x_{i}) $ and $ x_0 \in \initstates $ for all $i \in \Nats$. 
In contrast, a finite state sequence is represented as $ \langle x_0, x_1, \ldots, x_n \rangle $ for some $ n \in \Nats $.
With a slight abuse of notation, we use $f: \states \rightarrow \states$ when the transition map is a function.
In this work, we use the terms state sequence and path interchangeably.

\subsection{Barrier Certificates and Safety}
\label{subsec:safety_bar}
A system $\system = (\states, \initstates, f)$ is safe with respect to a set of unsafe states $\states_U \subseteq \states$ if for all state sequences $\langle x_0, x_1, \ldots \rangle$ we have $x_i \notin \states_u$ for all $i \in \Nats$.
To ensure safety, one may make use of the notion of barrier certificates \cite{prajna_2004_safety}.
\begin{definition}
\label{def:bar}
    Consider a system $\system = (\states, \initstates, f)$ and a set of unsafe states $\states_U \subseteq \states$.
    A function $\barrier: \states \to \Reals$ is a barrier certificate if:
    \begin{align}
        & \barrier(x_0) \geq 0 && \text{ for all } x_0 \in \initstates, \label{eq:bar_init}\\
        & \barrier(x_u) < 0 && \text{ for all } x_u \in \states_u, \text{ and } \label{eq:bar_unsafe} \\
        & \big( \barrier(x) \geq 0 \big)  \nonumber\\ &\implies \big(B(x') \geq 0 \big) && \text{ for all } x \in \states, x' \in f(x). \label{eq:bar_inv}
    \end{align}
\end{definition}
The existence of a function $\barrier$ as in Definition \ref{def:bar} ensures that a system $\system$ is safe \cite{prajna_2004_safety}.
We should note that from our definition we consider the invariant to be characterized as $\{ x \mid \barrier(x) \geq 0 \}$ unlike traditional barrier approaches where one considers the set $\{ x \mid \barrier(x) \leq  0 \}$ to be an invariant.
This is to ensure consistency with the definition of closure certificates.

\subsection{Finite Visits and Closure Certificates}
\label{subsec:fin_vis_clos}
A system $\system = (\states, \initstates, f)$ visits a set of states $\states_{VF} \subseteq \states$ only finitely often, if for all state sequences $\langle x_0, x_1, \ldots \rangle$ there exists some $j \in \Nats$ such that for all $i > j$, we have $x_i \notin \states_{VF}$.
To ensure that a region is visited only finitely often, one may make use of the notion of closure certificates \cite{murali2024closure}.
\begin{definition}
\label{def:cc}
    Consider a system $\system = (\states, \initstates, f)$.
    A function $\transclos: \states \times \states \to \Reals$ is a closure certificate if for all $ x,y \in \states$ and $x' \in f(x)$  we have:
    \begin{align}
        & \transclos(x, x') \geq 0, \text{ and} \label{eq:cc_base}\\
        & \big( \transclos(x', y) \geq 0 \big) \implies \big( \transclos(x, y) \geq 0 \big).\label{eq:cc_inv}
    \end{align}
\end{definition}
Given a set $\states_{VF}$ that must be visited finitely often (a.k.a. persistence specification), a closure certificate for persistence consists of a function $\transclos$ that satisfies conditions \eqref{eq:cc_base} and \eqref{eq:cc_inv}, and ensures that there exists $\xi \in \Reals_{ > 0}$, such that for all $x_0, \in \initstates$, $y',y'' \in \states_{VF}$,
    \begin{align}
        &\big( \transclos(x_0, y') \geq 0 \big) \wedge \big( \transclos(y',y'') \geq 0 \big) \nonumber\\ &\implies \big(\transclos(x_0,y'')\leq \transclos(x_0,y') - \xi). \label{eq:cc_decrease}
    \end{align}

The existence of a closure certificate for persistence ensures that a system visits the set $\states_{VF}$ only finitely often \cite{murali2024closure}.

\subsection{Linear Temporal Logic and B\"uchi Automata}
The Formulae in LTL~\cite{pnueli_1977_temporal}  are defined with respect to a set of finite atomic propositions $\AP$ that are relevant to our system. 
Let $\Sigma = 2^{\AP}$ denote the power set of atomic propositions.
A trace, or word, $w = \langle w_0, w_1, \ldots, \rangle \in \Sigma^{\omega}$ is an infinite sequence of sets of atomic propositions.
The syntax of LTL is described by the following grammar: 
\begin{align}
\psi &:= \top \;|\ a \;|\; \neg \psi \;|\; \psi \wedge \psi \;|\;  \nex \psi  \;|\; \psi \until \psi, \nonumber
\end{align}
where $\top$ indicates true, $a \in AP$ denotes an atomic proposition, and $\wedge$, $\neg$ denote the logical conjunction and negation, respectively.
The temporal operators next, and until, are denoted by $\nex$, and $\until$, respectively.
One can derive the other operators such as disjunction ($\lor$), release ($\release$), eventually ($\eventually$) and always ($\always$) respectively.
We refer the interested reader to \cite{vardi_2005_automata} for more details on the semantics.
To connect the state sequences of a system with a specification characterized in LTL, we associate a labeling function $\lab: \states \to \Sigma$ which maps each state of the system to a letter in the finite alphabet $\Sigma$.
That is, each state of the system is assigned a set of (possibly empty) atomic propositions.
This naturally generalizes to mapping a state sequence of the system $\langle x_0, x_1, \ldots \rangle \in \states^{\omega}$ to a trace $w = \langle \lab(x_0), \lab(x_1), \ldots \rangle \in \Sigma^{\omega}$. 
We denote the set of all the traces of a system $\system$ under the labeling map $\lab$ by $\Traces_{\lab}(\system)$.
A system $\system$ satisfies an LTL property $\phi$ under the labeling map $\lab$ if for all $w \in \Traces_{\lab}(\system)$, we have $w \models \psi$.
We denote this by $\system \models_{\lab} \psi$ and infer the labeling map from context.

Following \cite{vardi_2005_automata}, every LTL specification $\psi$ can be compiled into a nondeterminstic B\"uchi automaton (NBA) $\Aut$.
A nondeterminstic B\"uchi automaton (NBA) is a tuple $\Aut = (\Sigma,Q,q_0,\delta,Q_{Acc}) $, where $\Sigma$ is a finite alphabet, $Q$ is a finite set of states, $q_0 \in Q_0$ is the initial state of the automaton, $\delta \subseteq Q \times \Sigma \times Q$ is the transition relation, and $Q_{Acc} \subseteq Q$ denotes a set of accepting states.
A run of an NBA $\Aut = (\Sigma,Q,q_0, \delta, Q_{Acc})$ over a word $w = \langle w_0, w_1, \ldots \rangle \in {\Sigma}^{\omega}$ is an infinite sequence of states $\rho = \langle q_0, q_1, \ldots \rangle$ such that $q_0$ is the initial state, and $(q_{i}, w_i, q_{i+1}) \in \delta$ for all $i \in \Nats$.
A word $w \in \Sigma^{\omega}$ is said to be accepted by NBA $\Aut$ if there exists a run $\rho = \langle q_0, q_1, \ldots \rangle$ such that $\Inf(\rho) \cap Q_{Acc} \neq \emptyset$.
The language of an NBA $\Aut$ denoted by $\Lang(\Aut) \subseteq \Sigma^{\omega}$ is defined as $\Lang(\Aut) = \{ w \mid w  \text{ is accepted by} \Aut \}$.
Given an LTL formula $\psi$ over a set of atomic propositions $\AP$, there exists an NBA $\Aut = (\Sigma, Q, q_0, \delta,Q_{Acc})$ where $\Sigma = 2^\AP$ such that for all traces $w\in \Sigma^{\omega}$, we have $w \in \Lang(\Aut)$ if and only if $w \models \psi$.
We now consider the extension of LTL to hyperlogics.

\subsection{HyperLTL Specifications}
\label{subsec:HyperLTL}
Similarly to LTL formulae, HyperLTL formulae \cite{clarkson_hyperproperties_2010} are defined with respect to a set of atomic propositions $\AP$ that are relevant to the system.
Unlike LTL, HyperLTL can express specifications that are concerned with how traces of a system are related to each other.
These formulae can be thought of as a combination of trace variables and quantification with LTL.
The syntax of HyperLTL is described by the following grammar:
\begin{align}
\phi &:= \forall \pi\phi \;|\; \exists \pi \phi \;|\; \psi, \nonumber \\
\psi &:= \top \;|\; a_{\pi} \;|\; \psi \wedge \psi \;|\; \neg \psi \;|\; \psi \until \psi. \nonumber
\end{align}
The formulae generated by $\phi$ are related to quantification of traces of the system.
The formula $\forall \pi \phi$ states that for all traces $\pi$ of the system, the formula $\phi$ is valid.
Similarly, formula $\exists \pi \phi$ indicates that for some trace $\pi$ of the system the formula $\phi$ is valid.
Observe that these formulae may be nested, \textit{i.e.}, formula $\forall \pi_1 \exists \pi_2 \phi$ indicates that for all traces $\pi_1$ of the system, there exists some trace $\pi_2$ such that formula $\phi$ holds.
As these formulae quantify over traces, the variables $\pi$ are considered as trace variables.
The formulae generated by $\psi$ can be thought of as standard LTL formulae. 
The difference is that we use $a_{\pi}$ to denote that the atomic proposition $a$ holds for the trace variable $\pi$.
A trace variable is said to be free in a given HyperLTL formula if it is not quantified, \textit{e.g.}, $ \forall \pi_2 a_{\pi_1} \wedge \neg a_{\pi_2}$ has one trace variable $\pi_1$ that is free.
A HyperLTL formula $\phi$ is said to be closed if it does not have free variables.

Observe that one may compile a HyperLTL formula $\psi$ with $p$-trace variables and no trace quantification to an NBA \cite{clarkson_temporal_2014,finkbeiner2023automata} $\Aut^p = (\Sigma', Q, q_0, \delta, Q_{Acc})$, where the alphabet $\Sigma' = (2^\AP)^{p}$ consists of $p$-tuples of sets of atomic propositions.
For convenience, we refer to formulae of the form $\psi$ as quantifier-free formulae in HyperLTL.
While LTL formulae are defined with respect to traces, HyperLTL formulae are defined with respect to a set of traces $T \subseteq \Sigma^{\omega}$.
We refer the reader to \cite{clarkson_temporal_2014} for more details on the syntax and semantics of HyperLTL.
We use  $\system \models_{\lab} \phi$ to denote that the system under a labeling $\lab$ satisfies a closed HyperLTL formula $\phi$.
Throughout this paper, we are concerned with  HyperLTL formulae of the form $\quant_1 \pi_1 \quant_2 \pi_2, \ldots \quant_p \pi_p \psi$.
where $\quant$ is used as syntactic sugar for either the quantifier $\forall$ or $\exists$ and $\psi$ is a quantifier-free HyperLTL formula. Without loss of generality, we assume that $\quant_1 = \forall$ for notational convenience.
A property is in the fragment $\forall^* \exists^*$ if it is of the form $\phi = \forall \pi_1 , \ldots, \forall \pi_v \exists \pi_{v+1}, \ldots, \exists \pi_p \psi$, where $\psi$ is a quantifier-free HyperLTL formula.

\noindent \textbf{Game-based semantics.}
The satisfaction of a HyperLTL formula can be thought of as a game between two players, the universal and existential player, respectively.
The game proceeds in turns as follows. The player corresponding to the first quantifier $\quant_1$ selects a trace for $\pi_1$, and the game proceeds to the next player. 
Then the player controlling the quantifer $\quant_2$ selects a trace $\pi_2$.
This continues until the last player selects the trace $\pi_p$.
If the selected set of traces ensures the satisfaction of 
formula $\psi$, then formula $\phi$ holds, and the existential player wins.
This corresponds to the case where the system satisfies the desired specification, \textit{i.e.}, $\system \models_{\lab} \phi$.
If the selected traces do not satisfy $\psi$ then the universal player wins and the system does not satisfy the formula.

\noindent \textbf{Illustrative Example.}
To illustrate such properties and their use, consider the set of atomic propositions $\AP = \{s, o \}$ where $s$ indicates a state is secret and the presence or absence of $o$ indicates a binary output that is observable from a state.
Then opacity \cite{Bryan_2006_opacity} can be characterized as the HyperLTL formulae $\forall \pi_1 \exists \pi_2 s_{\pi_1} \implies \Big( \neg s_{\pi_2} \wedge \always(o_{\pi_1} \Leftrightarrow o_{\pi_2}) \Big)$.
That is, given any trace of the system $\pi_1$ that starts from some secret state, one may always find a trace $\pi_2$ that starts from a non-secret state and ensures that the outputs of the traces are always the same for all time.
We can compile the quantifier-free part of the opacity in the automaton in Figure \ref{subfig:aut_opacity} and its negation in Figure \ref{subfig:aut_opacity_comp}.
Here, we use logical formulae over $\AP$ to indicate the transitions in the automaton for convenience.
For example, the logical formula $s_{\pi_1} \wedge \neg s_{\pi_2} \wedge (o_{\pi_1} \Leftrightarrow o_{\pi_2})$ on the edge between states $q_0$ and $q_1$ is satisfied by the pair of sets $( \{s, o\}, \{o \})$, and the pair $( \{ s\}, \{ \})$. 
Here, the first pair indicates that the first trace $\pi_1$ has a secret state, the second $\pi_2$ does not, and the outputs of both states are $o$.
Similarly, the second pair indicates that the first trace has a secret, the second a non secret and that they both have no outputs.
The automaton then transitions from one state $q$ to another $q'$ if it reads any pair of sets that satisfy the logical formula.
In the above example, either pair can cause the NBA in Figure \ref{subfig:aut_opacity_comp} to transition from the initial state $q_0$ to an accepting state $q_1$.
\begin{figure}
    \centering
\begin{subfigure}{0.5\textwidth}
\centering
\begin{tikzpicture}[node distance=2.5cm,thick]
     \node[initial, accepting, state, initial text=,] (0) {$q_0$};
     \node[state, below right of = 0,, accepting ] (1) {$q_1$};
     \node[state, above right of = 1, ] (2) {$q_2$};
     \path [->]
     (0) edge node[above]{$s_{\pi_1} \wedge s_{\pi_2}$} (2)
     (0) edge  node[left] {$s_{\pi_1} \wedge \neg s_{\pi_2} \wedge (o_{\pi_1} \Leftrightarrow o_{\pi_2})$} (1) 
     (1) edge node[right]{$\neg (o_{\pi_1} \Leftrightarrow o_{\pi_2})$} (2) 
     (1) edge[loop below] node{$(o_{\pi_1} \Leftrightarrow o_{\pi_2})$} (1)
     (2) edge [loop above] node{$(\top, \top)$} (2);
     \end{tikzpicture}
\caption{Automaton representing the quantifier-free part of the opacity specification. This automaton accepts a pair of traces if they are opaque, that is, they have the same sequence of outputs. Here, state $q_1$ is accepting.}
\label{subfig:aut_opacity}
\end{subfigure}
\begin{subfigure}{0.5\textwidth}
\centering
\begin{tikzpicture}[node distance=2.5cm,thick]
     \node[initial, state, initial text=, ] (0) {$q_0$};
     \node[state, below right of = 0, ] (1) {$q_1$};
     \node[state, accepting, above right of = 1, ] (2) {$q_2$};
     \path [->]
     (0) edge node[above]{$s_{\pi_1} \wedge s_{\pi_2}$} (2)
     (0) edge  node[left] {$s_{\pi_1} \wedge \neg s_{\pi_2} \wedge (o_{\pi_1} \Leftrightarrow o_{\pi_2})$} (1) 
     (1) edge node[right]{$\neg (o_{\pi_1} \Leftrightarrow o_{\pi_2})$} (2) 
     (1) edge[loop below] node{$(o_{\pi_1} \Leftrightarrow o_{\pi_2})$} (1)
     (2) edge [loop above] node{$(\top, \top)$} (2);
     \end{tikzpicture}
\caption{Automaton representing the complement of opacity. This automaton accepts a pair of traces if they falsify opacity. Here, state $q_2$ is accepting.}
\label{subfig:aut_opacity_comp}
\end{subfigure}
\end{figure}

We should note that some of our techniques are also applicable to other hyperproperties that may not be captured via HyperLTL but are still expressible by a $\forall^* \exists^*$ quantification, such as $\delta$-approximate opacity \cite{zhang_2019_opacity}. 
\section{Problem Statement}
\label{sec:prob_statement}
We present a sound verification method to determine whether a given system satisfies a HyperLTL formula by using closure certificates \cite{murali2024closure}.
To do so, we draw inspiration from \cite{liu2020verification,anand2024verification}, which used barrier certificates on a self-composed system to verify HyperLTL formulae including opacity. 
We should note that in general the problem of verifying discrete-time systems against the HyperLTL formula is undecidable, as even simple safety properties are undecidable \cite{chatterjee2020polynomial}.
\begin{problem}
    Given a finite set of atomic propositions $\AP$, a system $\system = (\states, \initstates,\transfunc)$, a labeling map $\lab: \states \to (2^{\AP})$, and a  HyperLTL formula $\phi$, determine whether $\system \models_{\lab} \phi$.  
\end{problem}

We now demonstrate why augmented barrier certificates may fail to prove opacity with the help of finite-state examples, and then demonstrate how our approach allows for one to prove the opacity for such systems.

\section{Augmented Barrier Certificates for Verifying HyperLTL formula}
\label{sec:abc_hyperLTL}

We briefly explain the concept of augmented barrier certificates for verifying HyperLTL formulae, as used in \cite{anand2024verification}.
Intuitively, the goal of augmented barrier certificates is to find an appropriate safety argument in the $p$-fold composition of the system with itself to ensure that a system starting from the initial augmented state does not reach an accepting state.
We now illustrate its limitations by considering the hyperproperty of opacity.
We should note that the definition of augmented barrier certificates we consider here is more general than those present in \cite{liu2020verification,anand2024verification} as we consider quantifier alternations for the initial set of states.

\noindent \textbf{Illustrative Example.}
Consider the illustrative example of opacity and the NBA that denotes the complement in Figure \ref{subfig:aut_opacity_comp}.
Observe that if a pair of traces reach the accepting state $q_2$, then there is no way for the existential trace to exit this accepting state.
Thus, the authors of \cite{anand2024verification} seek to find appropriate augmented barrier certificates $\barrier : \states \times \states \to \Reals$ that seek to separate the initial state from an accepting state to guarantee opacity.
For ease of explanation, we use the following notation.
Let $\states_s = \{ x \mid \lab(x) = s \}$ denote the secret set of states and $\states_{ns}$ the non-secret states.
For any two states $x,y\in \states$, let $x \simeq_{o} y$ indicate that they agree on the output.
That is, if $x \simeq_{o} y$, then $o \in \lab(x)$ and $o \in \lab(y) $, or $o \notin \lab(x)$, and $o \notin \lab(y)$.
We use  $x \not \simeq_{o} y$ when the states $x,y$ do not agree on the output.

We define augmented certificates for opacity as follows:
\begin{definition}
\label{def:abc}
Consider a system $\system = (\states, \initstates, f)$, the set of atomic propositions $\AP = \{s, o\}$, and a labeling map $\lab: \states \to \Sigma$.
A function $\barrier: \states \times \states \to \Reals$ is an augmented barrier certificate if for all $x_s \in \initstates  \cap \secretstates$, there exists a state $x_{ns} \in \initstates \setminus \secretstates$, such that $x_s \simeq_{o} x_{ns} $, and :
\begin{align}
& \barrier(x_s, x_{ns}) \geq 0, \label{eq:abc_init}
\end{align}
and for all states $x,y \in \states$, such that  $x_s \not \simeq_{o} x_{ns} $, there exists $\xi \in \Reals_{ > 0}$ such that:
\begin{align}
& \barrier(x,y) \leq -\xi, \label{eq:abc_unsafe}
\end{align}
and for all states $x,y \in \states$, and for all $x' \in f(x)$, there exists $y' \in f(y)$ such that:
\begin{align}
     \big( \barrier(x,y) \geq 0 \big) \implies \big( \barrier(x',y') \geq 0 \big). \label{eq:abc_inv}
\end{align}
\end{definition}

Intuitively, the existence of an augmented barrier certificate as in Definition \ref{def:abc} ensures that a system satisfies the opacity by ensuring that a system does not reach the accepting state in the NBA in Figure \ref{subfig:aut_opacity_comp}.
Condition~\eqref{eq:abc_init} ensures that for any secret state there exists a similar non-secret state (i.e., with the same output), where the barrier value is nonnegative.
This corresponds to a transition from state $q_0$ to $q_1$ in Figure \ref{subfig:aut_opacity_comp}.
 Condition~\eqref{eq:abc_unsafe} ensures that when two states have different outputs, their barrier value must be strictly negative. 
This corresponds to states that cause the NBA to transition from state $q_1$ to $q_2$.
Finally, condition~\eqref{eq:abc_inv} guarantees that the barrier certificate does not become negative along transitions and thus cannot reach a state with a negative barrier value.
Hence condition~\eqref{eq:abc_inv} ensures that a system that reaches state $q_1$ continues to loop on the state $q_1$ and thus acts as a ``barrier'' that prevents the system from reaching state $q_2$.

Now consider the finite system in Figure~\ref{fig:abc_fails} to illustrate a case where an augmented barrier certificate cannot be found, as in Definition \ref{def:abc} although the system is opaque. 
Formally, consider a system $\system = (\states,\initstates,f)$ as in Figure~\ref{fig:abc_fails} with eleven states: $\states = \{x_0, x_1, \ldots, x_{10} \}$, and the initial set of states $\initstates = \{ x_0, x_5, x_8\}$.
Here, $x_0$ is designated as a secret state and has a label $s$. The states are color-coded according to their outputs. States that have an output $o$ are colored blue, and those that do not are colored red.

The system in Figure~\ref{fig:abc_fails} is opaque because for each state sequence originating from the secret state $x_0$, there exists a corresponding sequence beginning from a non-secret state where the outputs are the same.
Specifically, there are two state sequences starting from $x_0$: $\langle x_0, x_1, x_2, \ldots \rangle$ and $\langle x_0, x_3, x_4, \ldots \rangle$. 
Similarly to these, we find sequences beginning from non-secret states: $\langle x_5, x_6, x_7, \ldots \rangle$ and $\langle x_8, x_9, x_{10}, \ldots \rangle$ that have the same sequences of the output value $o$.
Now, we illustrate how we fail to find augmented barrier certificates as in Definition \ref{def:abc}.
Following condition~\eqref{eq:abc_init}, the value of the augmented barrier certificate for any secret and some non-secret states with the same output must be nonnegative.
Thus, we have $\barrier(x_0, x_5) \geq 0$, or $\barrier(x_0, x_8) \geq 0$.
Suppose that we have $\barrier(x_0, x_5) \geq 0$.
Then following condition~\eqref{eq:abc_inv}, we require that the value of the barrier certificate not decrease along transitions, and so we must have:
\begin{itemize}
    \item $\barrier(x_1, x_6) \geq 0$ and $\barrier(x_1, x_9) \geq 0$, and, 
    \item $\barrier(x_2, x_7) \geq 0$, and $\barrier(x_2, x_{10}) \geq 0$.
\end{itemize}
However, condition~\eqref{eq:abc_inv} contradicts this by requiring that if two states have different outputs, then the barrier value for that pair must be strictly negative. 
Thus, we must have $\barrier(x_2, x_{10}) < 0$ which is a contradiction.
In a symmetric fashion, we obtain a contradiction with the value of $\barrier(x_4, x_7)$, so we can conclude that there is no augmented barrier certificate as in Definition \ref{def:abc} to ensure the opacity of the above system.

The key reason for why augmented barrier certificates fail is due to a lack of \emph{lookahead}.
In particular, the choice of the trace $\pi_2$ can depend on the future actions of trace $\pi_1$.
If the existential player could know that the state $x_2$ or $x_4$ would eventually be reached by the trace $\pi_1$, then the trace $\pi_2$ can start from $x_5$ or $x_8$. 
However, augmented barrier certificates do not capture this ability of the existential player to look at the future choices of the universal player.

We now consider the notion of closure certificates to capture this lookahead.
Consider a function $\transclos: \states \times \states \to \Reals$, that satisfies conditions \eqref{eq:cc_base} and \eqref{eq:cc_inv}. 
Then for any state $y$ that can be reached from an initial state $x$, the value of the function $\transclos(x,y)$ is nonnegative.
We formalize and prove this result in the following lemma.

\begin{lemma} \label{lemma:transclos}
Consider a system $ \system = (\states, \initstates, \secretstates, \transfunc) $, a closure certificate $ \transclos: \states \times \states \rightarrow \Realpos $, and a state $x \in \states$.
Then for any $i \in \Nats_{\geq 1}$, we have $\transclos(x,f^i(x)) \geq 0$.
\end{lemma}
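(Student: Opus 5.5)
We argue by induction on $i \in \Nats_{\geq 1}$, reading $f^i(x)$ as any state $y$ reachable from $x$ in exactly $i$ steps. When $f$ is set-valued, $f^i(x)$ is the set $\{ y \mid \exists x_1 \in f(x), x_2 \in f(x_1), \ldots, y \in f(x_{i-1}) \}$, and the claim is $\transclos(x,y) \geq 0$ for every such $y$; when $f$ is a function this is the single iterate. To make the induction go through we prove the stronger statement in which $x$ ranges over all of $\states$: for every $x \in \states$ and every $y \in f^i(x)$, $\transclos(x,y) \geq 0$.

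\emph{Base case} $i=1$: if $y \in f(x)$, then condition \eqref{eq:cc_base} gives $\transclos(x,y) \geq 0$ directly.

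\emph{Inductive step}: assume the claim for $i$ and all starting states. Take $y \in f^{i+1}(x)$. We split the path at its first transition, so there is $x' \in f(x)$ with $y \in f^{i}(x')$. Applying the induction hypothesis at the starting state $x'$ gives $\transclos(x',y) \geq 0$. Condition \eqref{eq:cc_inv}, instantiated with this $x$, this $x' \in f(x)$ and this $y$, then yields $\transclos(x,y) \geq 0$.

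The only delicate point is the direction of the decomposition. Condition \eqref{eq:cc_inv} propagates nonnegativity backward, from $(x',y)$ to $(x,y)$. So the path must be split as ``one step, then $i$ steps'', not ``$i$ steps, then one step''. This is exactly why the hypothesis has to quantify over all starting states rather than the fixed $x$. Once that is arranged, the argument is routine, and by induction the lemma holds for all $i \in \Nats_{\geq 1}$.
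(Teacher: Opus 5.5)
Your proof is correct and follows the same approach as the paper: condition \eqref{eq:cc_base} handles the one-step pairs, and condition \eqref{eq:cc_inv} is applied inductively along the path to carry nonnegativity back to the pair $(x, f^i(x))$. Your version is more careful than the paper's sketch, which says ``by induction'' without fixing a direction. Since \eqref{eq:cc_inv} propagates backward rather than giving transitivity, the induction only works if the hypothesis is strengthened to all starting states with the first step split off, which is exactly what you do.
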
 

\begin{proof}
    Consider a state $y = f^i(x) \in \states $ for some $x \in \states$ and $i \in \Nats_{\geq 1}$. 
    Following condition \eqref{eq:cc_base}, we have $ \transclos(f(x), x) \geq 0 $, and similarly, $\transclos(f^j(x), f^{j+1}(x)) \geq 0$ for all $j \in \{0, \dots, i\}$.
 By condition \eqref{eq:cc_inv} and induction, we have $ \transclos(x, f^j(x)) \geq 0 $ for all $ i \in \{1, \dots, i\} $ and thus $ \transclos(x, f^i(x)) \geq 0 $.
\end{proof}
Intuitively, the function $\transclos$ acts as a lookahead. 
For any states $x, y \in \states$, we can conclude that $y$ can be reached from $x$ if $\transclos(x,y) \geq 0$ and so we can define functions with respect to this condition.
For convenience, we define the following primitive formula: 
\begin{align*}
&\InPath(x, y, z) = \big(\transclos(x, y) \geq 0 \big) \land \big( \transclos(y, z) \geq 0 \big)  \\
& \qquad \qquad \qquad \qquad \lor (y = z),
\end{align*}
where $ x $, $ y, z \in \states$ are states of the system. Intuitively, $ \InPath(x, y, z) $ is true if there is a path from state $ x $ to state $ z $ and state $ y $ is present in the path.
to capture this notion of lookahead.
We formalize this result in the following lemma:
\begin{lemma}\label{lemma:InPath}
Consider a system $\system = (\states, \initstates, \secretstates, \transfunc)$ and a state $x_0 \in \states$.
 If there exists a state sequence $ \langle x_0, x_1, \ldots, x_n, \ldots \rangle $, then $ \InPath(x_0, x_i, x_n) $ is true for all $ i \in \{0, \ldots, n\} $.
\end{lemma}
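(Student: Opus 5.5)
The plan is a case split on the index $i$, using Lemma~\ref{lemma:transclos} (in its path form) to supply the two closure inequalities in the definition of $\InPath$. Fix a state sequence $\langle x_0, x_1, \ldots, x_n, \ldots \rangle$ with $x_{j+1} \in f(x_j)$ for all $j$, and fix $i \in \{0,\ldots,n\}$. Recall that $\InPath(x_0,x_i,x_n)$ holds as soon as either $x_i = x_n$, or both $\transclos(x_0,x_i)\geq 0$ and $\transclos(x_i,x_n)\geq 0$. The cases are $i = n$ and $1 \leq i < n$; the index $i = 0$ is the delicate one and I treat it last.

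\emph{Case $i = n$.} Here $x_i = x_n$, so the disjunct $(y = z)$ in $\InPath$ holds and nothing further is needed.

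\emph{Case $1 \le i < n$.} First I will restate Lemma~\ref{lemma:transclos} for an arbitrary path rather than iterates of a function, since $f$ is set-valued in general. The claim is: for any $j<k$ on the sequence, $\transclos(x_j,x_k)\geq 0$. The proof is by induction on $k-j$. For the base case $k=j+1$, condition~\eqref{eq:cc_base} applied with $x=x_j$, $x'=x_{j+1}\in f(x_j)$ gives the inequality directly. For the step, suppose $\transclos(x_{j+1},x_k)\geq 0$. Condition~\eqref{eq:cc_inv} with $x=x_j$, $x'=x_{j+1}$, $y=x_k$ then yields $\transclos(x_j,x_k)\geq 0$. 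Applying the claim with $(j,k)=(0,i)$ and $(j,k)=(i,n)$ gives both conjuncts of $\InPath$.

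\emph{Case $i=0$ with $n>0$.} This is the main obstacle. The second conjunct $\transclos(x_0,x_n)\geq 0$ follows from the claim. The first conjunct, however, is $\transclos(x_0,x_0)\geq 0$, and conditions~\eqref{eq:cc_base}--\eqref{eq:cc_inv} do not force reflexivity unless $x_0$ lies on a cycle. I would try to resolve this in one of two ways:
\begin{itemize}
\item read $\InPath$ so that $y=x$ is also admitted, which is clearly the intended meaning ("$y$ is present in the path");
\item or note that the lemma is only used for $i\ge 1$ or with the convention that the starting point trivially lies on its own path, and add the harmless strengthening that $\transclos(x,x)\ge 0$ for all $x$. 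This assumption is compatible with~\eqref{eq:cc_inv} and~\eqref{eq:cc_base}, and makes the case go through.
\end{itemize}
I would state this explicitly in the write-up rather than hide it. Everything else is the routine induction above.
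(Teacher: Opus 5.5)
Your proposal is correct and follows the paper's proof: the endpoint case is handled by a trivial disjunct of $\InPath$, and the interior indices by Lemma~\ref{lemma:transclos}. Your path-based induction is a cleaner version of that lemma, which the paper states loosely via iterates $f^i(x)$ of a set-valued map. Your concern about $i=0$ is well founded. Under the displayed definition the claim can fail when $n\ge 1$, $x_0$ lies on no cycle and $\transclos(x_0,x_0)<0$, which Definition~\ref{def:cc} permits. The paper resolves this exactly as in your first option, by appealing to a ``third disjunction'' $x_i = x_0$ that is absent from the displayed definition of $\InPath$ but matches the case split on $x'_i = x_i$ in the SOS section.
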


\begin{proof}
We consider two cases:
\begin{itemize}
    \item [(1)] If $ x_i = x_0 $ or $ x_i = x_n $, the second or third disjunction in $ \InPath(x_0, x_i, x_n) $ is satisfied, and thus the formula is true.
    \item [(2)] Otherwise, $ x_i $ is an intermediate state on the path from $ x_0 $ to $ x_n $, and by Lemma~\ref{lemma:transclos}, we have $ \transclos(x_0, x_i) \geq 0 $ and $ \transclos(x_i, x_n) \geq 0 $. Therefore, $ \InPath(x_0, x_i, x_n) $ is true.
\end{itemize}
\end{proof}

We now show how one may use the conditions over closure certificates (as well as the $\InPath$ formula) to act as a lookahead for the verification of HyperLTL formulae.
For ease of exposition, we first present a definition of HyperCertificates specifically for opacity before we generalize these for other hyperproperties.

\section{HyperCertificates for opacity}
\label{subsec:ac_opacity_1}
To define a notion of HyperCertificates for opacity, we observe that if the traces of the system reach the accepting state of the NBA in Figure \ref{subfig:aut_opacity_comp}, then they continue to remain there regardless of a choice made by the players.
Thus, we can consider a notion of  HyperCertificates with lookahead to ensure safety.
\begin{definition}
    \label{def:acc_opac_1}
    Consider a system $\system = (\states, \initstates, f)$, the set of atomic propositions $\AP = \{s, o \}$, and a labeling map $\lab: \states \to \Sigma$. 
   A HyperCertificate for opacity is a pair  of functions $(\transclos, \barrier)$ where $\transclos: \states \times \states \to \Reals$ is a closure certificate as in Definition \ref{def:cc}, and $\barrier: \states \times \states \times \states  \to \Reals$ satisfies the following conditions. 
   There exists $\xi \in \Reals_{ > 0}$, such that for all states $x_0 \in \states_s$, and any state $z \in \states$, where $\transclos(x_0,z) \geq 0$, there exists $x_{ns} \in \states_{ns}$, such that  $x_0 \simeq_{o} x_{ns}$ and:
    \begin{align}
        & \barrier(x_0, x_{ns}, z) \geq 0, \text{ and }
        \label{eq:acc_opaque_bar_base}
    \end{align}
    for all states $x, y \in \states$ where $x \not \simeq_{o} y$, we have 
    \begin{align}
        & \barrier(x, y,z) \leq -\xi, \text{and} \label{eq:acc_opaque_bar_uns}
    \end{align}
    for all states $x,y \in \states$, and for all $x' \in f(x)$, where $\InPath(x_0, x', z)$ is true, there exists $y' \in f(y)$ such that
    \begin{align}
        & \big( \barrier(x,y,z) \geq 0 \big) \implies \big( \barrier(x',y',z) \geq 0 \big). \label{eq:acc_opaque_bar_next}
    \end{align}
\end{definition}

We provide an intuitive explanation of our approach. 
Recall from Lemma \ref{lemma:transclos} that if $\transclos(x_0,z) \geq 0$ then there may exist a path from a secret state (denoted by $x_0$) to state $z$. 
Our aim is to show the existence of a path from a non-secret state $x_{ns}$ with the same observations given the lookahead information of state $z$. 

The function $\barrier$ is defined so that its first two inputs represent the states of paths originating from an initial secret and non-secret states, respectively, while the third input represents the lookahead state of the path from the secret state.
Condition~\eqref{eq:acc_opaque_bar_base} requires the existence of a non-secret state $x_{ns}$ that is observationally indistinguishable from $x_0$, representing the initial non-secret state from which an opaque, indistinguishable path can be constructed. 
Condition~\eqref{eq:acc_opaque_bar_uns} ensures that the barrier certificate is negative over pairs of states whose output observations differ, this condition similar to the conditions in the augmented barrier approach.
Finally, condition~\eqref{eq:acc_opaque_bar_next} ensures that for every successor state on the path from the first argument state in the barrier function, there exists a corresponding successor state on the second argument path with a barrier value of at least zero. This condition guarantees that for every path from $x_0$ to $z$, there exists a corresponding path from $x_{ns}$ such that they remain indistinguishable based on the output.

We now prove the soundness of these conditions in the next theorem.
\begin{theorem}\label{thm:opaue}
    Consider a system $\system$, the set of atomic propositions $\AP = \{s, o \}$, and a labeling map $\lab: \states \to \Sigma$ . The existence of a HyperCertificate $(\transclos, \barrier)$ as in Definition \ref{def:acc_opac_1} guarantee that the system $\system$ is opaque. 
\end{theorem}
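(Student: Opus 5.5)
The plan is to fix an arbitrary state sequence $\langle x_0, x_1, \ldots \rangle$ with $x_0 \in \initstates \cap \states_s$, and to construct a matching sequence from a non-secret initial state by using the lookahead $z$ as a ``target'' placed far along the secret trace. I will first obtain agreement on every finite horizon, and then pass to the infinite trace.

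\textbf{Step 1 (finite prefixes).} Fix $n \in \Nats_{\geq 1}$ and set $z = x_n$.
\begin{itemize}
\item By Lemma~\ref{lemma:transclos}, applied along the path, we have $\transclos(x_0, z) \geq 0$.
\item Condition~\eqref{eq:acc_opaque_bar_base} therefore yields $x_{ns} \in \states_{ns}$ with $x_0 \simeq_o x_{ns}$ and $\barrier(x_0, x_{ns}, z) \geq 0$. Set $y_0 = x_{ns}$.
\item Inductively, suppose $\barrier(x_i, y_i, z) \geq 0$ for some $i < n$. By Lemma~\ref{lemma:InPath}, $\InPath(x_0, x_{i+1}, z)$ holds, because $x_{i+1}$ lies on the path from $x_0$ to $x_n$; the endpoint case $x_{i+1} = z$ is covered by the disjunct $y = z$.
\item Condition~\eqref{eq:acc_opaque_bar_next} then supplies $y_{i+1} \in f(y_i)$ with $\barrier(x_{i+1}, y_{i+1}, z) \geq 0$.
\item Since $\barrier(x_i, y_i, z) \geq 0 > -\xi$, condition~\eqref{eq:acc_opaque_bar_uns} forces $x_i \simeq_o y_i$ for every $i \leq n$.
\end{itemize}
So for every $n$ there is a finite path $\langle y_0, \ldots, y_n \rangle$ from a non-secret initial state whose outputs agree with those of $x_0, \ldots, x_n$. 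In terms of the NBA of Figure~\ref{subfig:aut_opacity_comp}, the pair never reaches $q_2$ within $n$ steps.

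\textbf{Step 2 (from all horizons to the infinite trace).} This is the step I expect to be the main obstacle. The witness $x_{ns}$ and the sequence $y_i$ depend on $z$, and the certificate says nothing beyond $z$, because $\InPath$ fails there. To pass to an infinite counter-trace, I would first try a K\"onig's lemma argument. Consider the tree of finite non-secret paths $\langle y_0, \ldots, y_n \rangle$ that are output-equivalent to the prefix $\langle x_0, \ldots, x_n \rangle$. This tree is closed under prefixes and, by Step 1, contains nodes at every depth. If $\initstates \setminus \states_s$ and each $f(y)$ are finite, or more generally under a compactness assumption such as compact state sets, continuous $f$, and closed output regions, the tree has an infinite branch. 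That branch is a full trace $\pi_2$ starting from $\states_{ns}$ with $\always(o_{\pi_1} \Leftrightarrow o_{\pi_2})$.

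If such assumptions are not available, the alternative I would try is a diagonal choice of $z$ along a cofinal set of indices, reusing witnesses so that the finite paths cohere. This may require strengthening the definition so that the witness is chosen uniformly, and I would flag that as a possible gap.

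\textbf{Conclusion.} Since $\pi_1$ was arbitrary, every secret trace has an output-equivalent non-secret counterpart. This is exactly the satisfaction of the opacity formula $\forall \pi_1 \exists \pi_2\, s_{\pi_1} \implies (\neg s_{\pi_2} \wedge \always(o_{\pi_1} \Leftrightarrow o_{\pi_2}))$. Equivalently, no chosen pair drives the complement automaton of Figure~\ref{subfig:aut_opacity_comp} into the absorbing accepting state $q_2$, so $\system$ is opaque.
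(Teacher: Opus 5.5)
Your Step~1 is exactly the paper's proof. The paper fixes a secret run $\langle x_0,\ldots,x_n\rangle$ and uses the lookahead $z=x_n$. It gets $\barrier(x_0,x'_0,x_n)\ge 0$ from \eqref{eq:acc_opaque_bar_base}, propagates $\barrier(x_i,x'_i,x_n)\ge 0$ with \eqref{eq:acc_opaque_bar_next} and Lemma~\ref{lemma:InPath}, and contradicts \eqref{eq:acc_opaque_bar_uns}. It phrases this by contradiction and stops there. So it only shows that every \emph{finite} secret run has an output-equivalent non-secret run of the same length. Your explicit inductive choice of $y_{i+1}$ is tidier than the paper's wording. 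Like the paper, you silently read the witness $x_{ns}$ as an \emph{initial} non-secret state. Definition~\ref{def:acc_opac_1} as written only gives $x_{ns}\in\states_{ns}$, so that must be added to the definition.

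Your Step~2 identifies a real issue that the paper's proof never addresses. It cannot be repaired in general, because the statement is false for the $\always$-formula without extra hypotheses. Here is a counterexample.
\begin{itemize}
\item Take distinct states $a_k$ and $b_{n,k}$ ($n,k\in\Nats$) with $f(a_k)=\{a_{k+1}\}$ and $f(b_{n,k})=\{b_{n,k+1}\}$.
\item Let $\initstates=\{a_0\}\cup\{b_{n,0}: n\in\Nats\}$, with only $a_0$ secret.
\item Label every $a_k$ with $o$, and label $b_{n,k}$ with $o$ iff $k\le n$.
\end{itemize}
The secret trace carries $o$ forever, while each non-secret trace loses it at step $n+1$, so the system is not opaque. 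Now set:
\begin{itemize}
\item $\transclos(x,y)=0$ if $y$ is reachable from $x$ in at least one step, and $-1$ otherwise;
\item $\barrier(a_k,b_{n,k},a_n)=0$ for $k\le n$, and $\barrier=-1$ elsewhere;
\item $\xi=1$.
\end{itemize}
Every condition of Definition~\ref{def:acc_opac_1} then holds:
\begin{itemize}
\item the admissible lookaheads are $z=a_n$ with $n\ge 1$, and the witness is $b_{n,0}$;
\item $\barrier\ge 0$ only on output-equivalent pairs;
\item $\InPath(a_0,a_{k+1},a_n)$ holds iff $k+1\le n$, so \eqref{eq:acc_opaque_bar_next} is only demanded for $k<n$, where $b_{n,k+1}$ works.
\end{itemize}
So your diagonal idea cannot succeed for the definition as stated. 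The witness is allowed to depend on the horizon encoded in $z$, and nothing forces the witnesses to cohere.

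What you can soundly conclude is twofold. Step~1 alone proves the finite-run form of initial-state opacity, which is all the paper's argument delivers. Step~1 plus your K\"onig argument proves the infinite-trace form when $\initstates\setminus\secretstates$ and every $f(y)$ are finite. The example shows that finiteness of the non-secret initial set cannot be dropped, even for deterministic dynamics. Your compactness variant also needs more care than you indicate. The sets of non-secret initial states admitting an $n$-step match must be closed, which requires $\initstates\setminus\secretstates$ to be closed. That fails, for example, in the paper's vehicle case study, where $\secretstates$ is closed. It also requires both the $o$-region and its complement to be closed.
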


\begin{proof}
    We prove this claim by contradiction.
    Consider an arbitrary length state sequence of length $n$ for any $n \in \Nats$ from a secret state $x_0$, $\rho = \langle x_0, x_1, \ldots, x_n  \rangle$, where $s \in \lab(x_0)$. 
    Assume, in contradiction, that for all state sequences $\rho' = x'_0, \ldots x'_n $ from a non-secret state ($s \notin \lab(x'_0)$)  there exists some $j \in \{0, \ldots, n \}$ such that $x_j \not \simeq_{o} x'_j$.
     Following Lemma~\ref{lemma:transclos}, we know that for each $i \in \Nats$, we have $\transclos(x_0, x_i) \geq 0$. 
     Following condition~\ref{eq:acc_opaque_bar_base} we must have some $x'_0$ such that $\barrier(x_0, x'_0, x_n) \geq 0$. 
     Consider the state sequences $\rho' = x'_0, \ldots, x'_n$ from this initial state $x'_0$.
     By assumption, there exists $j \in \{0, \ldots, n \}$, such that 
    Following condition~\ref{eq:acc_opaque_bar_next} and Lemma~\ref{lemma:InPath}, we have $\barrier(x_i, x'_i, x_n) \geq 0$ for all $i \in \{1, \ldots, n\}$. 
    However, condition~\ref{eq:acc_opaque_bar_uns} requires $\barrier(x_j, x'_j, x_n) \leq -\xi < 0$, which is a contradiction.
\end{proof}

Recall Figure~\ref{fig:abc_fails} to examine how the approach outlined above successfully provides a valid certificate to verify opacity in this example. 
By Lemma~\ref{lemma:transclos}, we know that $\transclos(x_0, x_2) \geq 0$ and $\transclos(x_0, x_4) \geq 0$, which confirms that states $x_2$ and $x_4$ are reachable from the secret state $x_0$ for any function $\transclos$ satisfying conditions \eqref{eq:cc_base} and \eqref{eq:cc_inv}.
An example of such a function is to define $\transclos(x,y) \geq 0$, if and only if the state $y$ is reachable from the state $x$.
Similarly, we have $\transclos(x_0, x_1) \geq 0$ and $\transclos(x_0, x_3) \geq 0$.
Next, applying condition~\ref{eq:acc_opaque_bar_next}, we select non-secret states $x_5$ and $x_8$ establishing that $\barrier(x_0, x_5, x_3) \geq 0$ and $\barrier(x_0, x_8, x_4) \geq 0$. This ensures the existence of non-secret initial states with paths that match the observations from $x_0$ to $x_1$ and $x_2$.
Finally, using condition~\ref{eq:cc_inv}, we verify that for each transition along the path from $x_0$, there exists a matching transition in the paths from the non-secret states $q_5$ and $q_8$. Specifically, we get $\barrier(x_1, x_6, x_2) \geq 0$ and $\barrier(x_2, x_7, x_2) \geq 0$.
Similarly, we get $\barrier(x_3, x_9, x_4) \geq 0$ and $\barrier(x_4, x_{10}, x_{10}) \geq 0$.
For the cases where $z = x_1$, or $z = x_3$, we may choose $x_5$, or $x_8$ for both. 
This combination of barrier and closure certificates successfully verifies that the system is opaque.

In the following section, we describe a generalization of the above to provide automata based certificate construction for $\forall^* \exists^*$ HyperLTL formulae and then show how to generalize these to general HyperLTL formulae.

\section{$\forall^* \exists^*$ HyperCertificates}
\label{subsec:forall_exist_certs}
For convenience, we first consider the HyperLTL formulae in this Section to be in the $\forall^* \exists^*$-fragment, \textit{i.e.}, of the form  $\phi = \forall \pi_1 , \ldots, \forall \pi_v \exists \pi_{v+1}, \ldots, \exists \pi_p \psi$, where $\psi$ is a quantifier-free HyperLTL formula and let $b = p - v$.
To verify our system against a hyperproperty in the $\forall^* \exists^*$-fragment, we first take a composition of the system with itself.
Formally, consider the product system as $\prodsys= (\prodstate, \prodstate_0, \prodtrans)$, where $\prodstate = \states^p$ denotes the set of states, $\prodinitstate = \initstates^p$ the set of initial states, and $(x'_1, \ldots, x'_p) \in \prodtrans \big(  (x_1, x_2, \ldots, x_p) \big)$  if $x'_i \in f(x_i)$ for all $1 \leq i \leq p$.
For convenience, we use $\bm{x} = (x_1, x_2, \ldots, x_p)$ to denote a state of the product system.
Observe that in the product system, the universal player is allowed control over the first $v$ components, while the existential player controls the last $b$ components.
Thus, when we select a successor state $\bm{x}'$, we let the universal player select the first $v$ components whereas the existential player controls the last $b$ components. 
Given a HyperLTL formula $\phi = \forall \pi_1, \ldots \forall \pi_v \exists {\pi_{v+1}} \ldots, \exists \pi_p \psi$, under a game-based semantics \cite{finkbeiner2021model}, we can consider the verification problem as determining whether the traces of the product system visit the accepting state of the NBA corresponding to $\psi$ infinitely often.
If so, we can conclude that $\system \models_{\lab} \phi$.

We now present our set of conditions to verify the systems against $\forall^* \exists^*$ HyperLTL formulae.
Similarly to opacity, our conditions rely on closure certificates \cite{murali2024closure} for the lookahead.
However, since we are dealing with general LTL properties (and no longer safety), we instead consider a combination of closure certificates with B\"uchi ranking functions as in \cite{chatterjee2024sound} as follows.
\begin{definition}
    \label{def:Hyper_cert_FE}
    Consider a system $\system = (\states, \initstates, f)$, a finite set of atomic propositions $\AP$, a labeling map $\lab$, and a HyperLTL formula $\phi = \forall \pi_1, \ldots \forall \pi_p \exists {\pi_{v+1}} \ldots, \exists \pi_p; \psi$.
    Consider the product system $\prodsys$, and let NBA $\Aut = (\Sigma, Q, q_0, \delta, Q_{Acc})$ represent  the LTL-specification $\psi$.
    Let  $\xi \in \Reals_{ > 0}$ be a non-negative value, then a $\forall^* \exists^*$ HyperCertificate is a pair of functions $(\transclos, \barrier)$, where  $\transclos: \states \times \states \to \Reals$ is a closure certificate satisfying conditions \eqref{eq:cc_base} and \eqref{eq:cc_inv}, and  function $\barrier: \prodstate \times \states^v \times Q\to \Reals_{ \geq 0}$ that satisfies the following conditions, where for all states $x_{0,1}, x_{0,2}, \ldots, x_{0,v} \in \initstates$, and for all states $\bm{z} = (z_{1}, z_2, \ldots z_v) \in \states^v$, where $\transclos(x_{0,i}, z_i) \geq 0$, there exist states $x_{0,v+1}, \ldots, x_{0,p}$ such that: 
\begin{align}
\barrier( \bm{x}_0, \bm{z}, q_0) \geq 0, \label{eq:Hyper_Cert_FE_base}
\end{align}
where $\bm{x}_0 = (x_{0,1}, \dots, x_{0,p}) \in \prodstate_{0}$.
And for all states $\bm{x} = (x_1, \ldots, x_p) \in \prodstate$, and for all states $x'_i \in f(x'_i)$ for all $1 \leq i \leq v$ where $\InPath(x_i, x'_i, z_i)$, there exist states $x'_{j} \in f(x_{j}) $, for all $v+1 \leq j \leq p$, such that:
\begin{align}
   & \big(\barrier(\bm{x},\bm{z},q) \geq 0 \big)  \implies \nonumber \\ &  \big(\barrier(\bm{x}', \bm{z}, q') \geq 0 \big), \text{ for all } q \in Q \cap Q_{Acc}, \text{ and } \label{eq:Hyper_Cert_FE_buchi_nonincrease} \\
   &\big(\barrier(\bm{x},\bm{z},q) \geq 0 \big) \implies \big(0 \leq \barrier( \bm{x}', \bm{z}, q') \big) \wedge \nonumber \\ &\big(\barrier( \bm{x}', \bm{z}, q') \leq \barrier(\bm{x}, \bm{z}, q) - \xi \big), \text{ for all } q \in Q \setminus Q_{Acc} \label{eq:Hyper_CERT_FE_buchi_rank}
\end{align}
where $\bm{x}' = ( x'_1, x'_2, \ldots x'_p)$, and $q' \in \delta(q, (\lab(x_0), \ldots, \lab(x_p))$.
\end{definition}

We now show how HyperCertificates can be used to verify systems against HyperLTL formulae in the $\forall^* \exists^*$-fragment.

\begin{theorem}
    \label{thm:acc_hyper_verif}
    Consider a system $\system = (\states, \initstates, f)$ and a $\forall^* \exists^*$-HyperLTL formula $ \phi = \forall \pi_1 , \ldots, \forall \pi_p \exists \pi_{v+1}, \ldots, \exists \pi_p \psi$.
    Let $\Aut = (\Sigma, Q, q_0, \delta, Q_{Acc})$ correspond to the quantifier-free formulae $\psi$.
 Then the existence of a HyperCertificate as in Definition \ref{def:Hyper_cert_FE} ensures that $\system \models_{\lab} \phi$. 
\end{theorem}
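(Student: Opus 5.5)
The proof mirrors that of Theorem~\ref{thm:opaue}, with the safety argument replaced by a B\"uchi ranking argument. We fix universal traces, build existential witnesses step by step, and then show that the product run visits $Q_{Acc}$ infinitely often.

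\emph{Fixing the universal traces.} Fix arbitrary state sequences $\rho_i = \langle x_{0,i}, x_{1,i}, \ldots \rangle$ for $1 \le i \le v$. These are the choices of the universal player. Since the definition only quantifies over a single lookahead vector $\bm{z}$, we argue on finite prefixes first. Fix a horizon $n \in \Nats$ and set $z_i = x_{n,i}$. By Lemma~\ref{lemma:transclos}, $\transclos(x_{0,i}, z_i) \ge 0$, so condition~\eqref{eq:Hyper_Cert_FE_base} yields existential initial states $x_{0,v+1},\ldots,x_{0,p}$ with $\barrier(\bm{x}_0,\bm{z},q_0)\ge 0$.

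\emph{Building the witnesses inductively.} Suppose that at step $k<n$ we have $\bm{x}_k$ and an automaton state $q_k$ with $\barrier(\bm{x}_k,\bm{z},q_k)\ge 0$. By Lemma~\ref{lemma:InPath}, $\InPath(x_{k,i}, x_{k+1,i}, z_i)$ holds for each universal component. Here the closure certificate is used relative to the current state, exactly as the condition is quantified. Conditions \eqref{eq:Hyper_Cert_FE_buchi_nonincrease} and \eqref{eq:Hyper_CERT_FE_buchi_rank} then provide existential successors $x_{k+1,j}\in f(x_{k,j})$ such that $\barrier(\bm{x}_{k+1},\bm{z},q_{k+1})\ge 0$ for the automaton successor $q_{k+1}$. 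Moreover, $\barrier$ strictly decreases by $\xi$ whenever $q_k\notin Q_{Acc}$. Since $\barrier\ge 0$ along this chain, any maximal block of non-accepting automaton states has length at most $\lceil \barrier(\cdot)/\xi\rceil$. Hence every segment of the constructed run of $\Aut$ contains accepting states with bounded gaps.

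\emph{Main obstacle: passing from finite horizons to infinite traces.} The existential witnesses depend on $\bm{z}$, and therefore on $n$. We need a single infinite tuple of existential traces, together with an accepting run, that works for the whole infinite universal traces. I would handle this in one of two ways.
\begin{itemize}
\item[(a)] Take $\bm{z}$ to be a state in $\Inf$ of the universal traces, or a state that recurs along them, so that $\InPath(x_{k,i},x_{k+1,i},z_i)$ holds for all $k$. This lets the inductive construction continue forever with one fixed $\bm{z}$.
\item[(b)] Use a K\"onig-style compactness argument over the finitely many automaton states, with the ranking bound giving uniform control on the distance between accepting visits.
\end{itemize}
I would try (a) first. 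When a universal trace visits $z_i$ infinitely often, every later state of the trace lies on a path to a future occurrence of $z_i$. Lemma~\ref{lemma:InPath} then keeps the hypothesis of the step conditions valid at all times.

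\emph{Conclusion.} The constructed infinite run visits $Q_{Acc}$ infinitely often, because every non-accepting stretch is finite by the ranking decrease. Hence the word formed by the labels of $\rho_1,\ldots,\rho_p$ lies in $\Lang(\Aut)$, i.e.\ it satisfies $\psi$. Since the universal traces were arbitrary, $\system\models_{\lab}\phi$.
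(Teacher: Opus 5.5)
\paragraph{Where the proposal breaks.} Your finite-horizon step is sound. For $k<n$ the guard $\InPath(x_{k,i},x_{k+1,i},x_{n,i})$ does follow from the closure-certificate conditions, and the ranking argument works on $[0,n]$. The gap is the passage to infinite traces, which you leave open, and neither proposed fix works in the setting of the theorem.

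Route (a) presupposes that every universal trace visits some state infinitely often. For infinite-state systems this is false in general: $f(x)=x+1$ on $\Nats$ gives $\Inf(\rho_i)=\emptyset$. Then no fixed $\bm z$ keeps $\InPath(x_{k,i},x_{k+1,i},z_i)$ true for all $k$. Once a universal component passes $z_i$, the step conditions of Definition~\ref{def:Hyper_cert_FE} simply stop applying.

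Route (b) fails for two reasons. First, $f(x)$ may be infinite, so there is no finitely branching tree for K\"onig's lemma. Second, the ranking bound is not uniform. The admissible non-accepting stretch is controlled by $\barrier(\bm x_0,\bm z,q_0)/\xi$, which may grow with the horizon $n$, so your ``bounded gaps'' can be as long as the horizon itself.

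\paragraph{The gap cannot be closed.} The statement as written cannot be proved, because the certificate only constrains prefixes up to the lookahead. Take $\states=\Nats$, $\initstates=\{0\}$, $f(x)=\{x+1\}$, $\lab(x)=\emptyset$ for all $x$, and $\phi=\forall\pi_1\exists\pi_2\,\always\eventually a_{\pi_1}$. Use the deterministic two-state automaton that moves to the accepting state $q_1$ exactly when it reads $a$ in the first component, and to $q_0$ otherwise. Clearly $\system\not\models_{\lab}\phi$.

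Yet $\transclos(x,y)=y-x-1$ is a closure certificate. With $\xi=1$, $\barrier(\bm x,z,q_0)=\max(z-x_1,0)$ and $\barrier(\bm x,z,q_1)=0$, every condition holds:
\begin{itemize}
\item the base condition holds, since $\barrier((0,0),z,q_0)=z\geq 0$;
\item $\InPath(x_1,x_1+1,z)$ holds iff $z\geq x_1+1$;
\item in that case the successor automaton state is $q_0$, and $\barrier$ drops by exactly $1$, from $z-x_1\geq 1$ to $z-x_1-1\geq 0$.
\end{itemize}

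The paper's own proof does not close this gap either. It reaches its contradiction only by iterating the decrease with one fixed $\bm z$ for all $i\to\infty$, silently ignoring the $\InPath$ guard — exactly the step you flagged.

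Your route (a) does give a correct proof under extra hypotheses. You need something that guarantees recurrence, e.g.\ finite-state systems, where pigeonhole gives some $z_i\in\Inf(\rho_i)$ for each universal trace. You also need $\Aut$ to be complete, so that runs cannot die.
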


\begin{proof}
    We prove this by contradiction.
 Suppose that there exists a choice of traces for the universal player such that no matter the choice of the existential player, the resulting trace is not accepted.
    Formally, let the universal player select the state sequences for the first $v$ components starting from the initial states $(x_{0,1}, \ldots, x_{0,v})$  and let the existential player select the state sequences for the last $b$ components from the initial states $(x_{0,{v+1}}, \ldots, x_{0,v+b})$.        
    Then $\bm{x} = (x_{0,1}, \ldots, x_{0,v},x_{0,{v+1}}, \ldots, x_{0,v+ b} ) \in \bm{\initstates}$ represents the initial state.
    Let these state sequences correspond to the traces $\langle \lab(x_{0,i}), \lab(x_{1,i}), \ldots \rangle $ for all $1 \leq i \leq p$.
    Then we know that NBA $\Aut$ does not accept the infinite trace $\langle (\lab(x_{0,1}), \ldots, \lab(x_{0,p})), (\lab(x_{1,1}), \ldots, \lab(x_{1,p})), \ldots  \rangle$ and thus visits the accepting states $Q_{Acc}$ only finitely.
    Let the index where the product system stops visiting $Q_{Acc}$ infinitely be denoted as $j$ for some $j \in \Nats$.
    For all $\bm{z} = (z_1, \ldots, z_v)$ such that $\transclos(x_{0,i}, z_i) \geq 0$, we have $\barrier(\bm{x}_0, \bm{z}, q_0) \geq 0$ as in  \eqref{eq:Hyper_Cert_FE_base}.
 Following conditions \eqref{eq:Hyper_Cert_FE_buchi_nonincrease} and \eqref{eq:Hyper_CERT_FE_buchi_rank} and through induction, we have $\barrier(\bm{x}_{(j+i)}, \bm{z}, q') \leq \barrier(\bm{x}_{j}, \bm{z}, q) - i \xi$ for all $q \in Q \setminus Q_{Acc}$ and all $i \in \Nats_{\geq 1}$.
    As $i$ goes to $\infty$ we have $\barrier(\bm{x}_{j_i}, \bm{z}, q') < 0$ which is a contradiction.
\end{proof}

Note that conditions \eqref{eq:Hyper_Cert_FE_buchi_nonincrease} and \eqref{eq:Hyper_CERT_FE_buchi_rank} require that the existential player \emph{first} selects the next state regardless of the state of the automaton (the choice of existential state $x'_j \in f(x_j)$ must be the same for all choices of $q \in Q$).
We may swap these for restricted classes of hyperproperties as discussed in \cite[Section III.D]{anand2024verification}.
While the above approach is more general than augmented barrier certificates, one may still fail to find HyperCertificates in cases where multiple distinct paths share the same initial and end states even if the system satisfies the desired specification as illustrated below.

\begin{figure}[h]
   \begin{tikzpicture}
    \begin{scope}[xshift=0cm] 
    \node[initial, state, initial text=, fill=blue!40!white] at (0,0) (0) {$x_0$};
    \node[below=0.05cm of 0] {$s$};
     \node[state,fill=red!40!white ] at (2,1) (1) {$x_1$};
     \node[state,fill=blue!40!white ] at (2,-1) (2) {$x_2$};
     \node[state,fill=blue!40!white ] at (4,0) (3) {$x_3$};
     
    \node[initial,initial text=, state,fill=blue!40!white ] at (0,-4) (4) {$x_4$};
    \node[below=0.05cm of 5] {$\neg s$};     
        \node[state,fill=red!40!white ] at (2,-4) (5) {$x_5$};

     \node[initial,initial text=, state,fill=blue!40!white ] at (0,-6) (6) {$x_6$};
     \node[below=0.05cm of 8] {$\neg s$};
     \node[state,fill=blue!40!white ] at (2,-6) (7) {$x_7$};
     \node[state,fill=blue!40!white ] at (4,-5) (8) {$x_{8}$};

     \path[->]
     (0)  edge node{} (1)
     (0)  edge node{} (2)
     (1)  edge node{} (3)
     (2)  edge node{} (3)
     (3) edge[loop right] node{} (3)
     
     (4)  edge node{} (5)
     (5)  edge node{} (8)
     (6)  edge node{} (7)
     (7)  edge node{} (8)
     (8) edge[loop right] node{} (8);
    \end{scope}
   \end{tikzpicture}
   \caption{A finite state example illustrating the limitations of $\forall^* \exists^*$ certificates as in Definition\ref{def:acc_opac_1} indicating the limitations of our approach.}
   \label{fig:mult_path}
\end{figure}
\textbf{Illustrative Example:}
Consider a finite state system $\system = (\states, \initstates, f)$ as in Figure~\ref{fig:mult_path}, where $\states = \{x_0, \ldots, x_8 \}$ indicates the set of states, and $\initstates = \{ x_0, x_4, x_6 \}$ the initial set of states. Consider the secret initial state $x_0$ as the starting point of the path and $x_3$ as the end state.
By Lemma~\ref{lemma:transclos}, we know that $\transclos(x_0, x_3) \geq 0$, since $x_3$ is reachable from $x_0$. 

According to condition~\eqref{eq:acc_opaque_bar_base}, we must select a non-secret state to match this path.
The options are $\barrier(x_0, x_4, x_3) \geq 0$ and $\barrier(x_0, x_6, x_3) \geq 0$.
Next, condition~\eqref{eq:acc_opaque_bar_next} requires that there exist intermediate states along these paths to maintain the barrier condition. 
Specifically, we would need $\barrier(x_2, x_5, x_3) \geq 0$ or $\barrier(x_1, x_7, x_3) \geq 0$. However, according to condition~\eqref{eq:acc_opaque_bar_uns}, the pairs $(x_2, x_5)$ and $(x_1, x_7)$ do not have the same outputs; thus, we require that $\barrier(x_2, x_5, x_3) < 0$ and $\barrier(x_1, x_7, x_3) < 0$, which leads to a contradiction.
Such a limitation also persists in the case of  HyperCertificates.
An approach to eliminate this problem is to consider $k$-length sequences instead of a single-step lookahead that captures only one state similar to the results in \cite{finkbeiner2023automata}.
However, this would require parameterizing the lookahead $\bm{z}$ with a $k$-tuple of elements for each of the $v$ traces.
We adopt a similar approach as above for general hyperproperties in the following.

\section{HyperCertificates for General Hyperproperties}
\label{sec:gen}
We now present a set of conditions to verify systems against general HyperLTL formulae.
Similarly to the approach for $\forall^* \exists^*$ hyperproperties, we make use of certificates over the product of the product system and the NBA representing the quantifier-free part of the HyperLTL formula.
For notational convenience, we use $\phi = \quant_1 \pi_1, \quant_2 \pi_2, \ldots, \quant_p \pi_p \psi$ to denote arbitrary HyperLTL formulae,  where $\quant_i \in \{ \forall, \exists \}$ for all $i \in \{1, \ldots, p\}$. Without loss of generality, we assume that $\quant_1 = \forall$.
Let $v$ denote the number of universally quantified traces and $b$ denote the number of existentially quantified traces, with $p = v + b$. We define the index sets $\indexforall = \{ j_1, j_2, \ldots, j_v \} \subseteq \{1, \ldots, p\}$ for the positions of the $\forall$ quantifiers and $\indexexists = \{ \ell_1, \ell_2, \ldots, \ell_b \} = \{1, \ldots, p\} \setminus \indexforall$ for the positions of the $\exists$ quantifiers. By assumption, we set $j_1 = 1$.

\begin{definition}
    \label{def:def:Hyper_cert}
    Consider a system $\system = (\states, \initstates, f)$, a finite set of atomic propositions $\AP$, a labeling map $\lab$ and a HyperLTL formula $\phi = \quant_1 \pi_1, \quant_2 \pi_2, \ldots, \quant_p \pi_p, \psi$, the product system $\prodsys$ and the  NBA $\Aut = (\Sigma, Q, q_0, \delta, Q_{Acc})$ denoting the formula $\psi$.
    Let  $\xi \in \Reals_{ > 0}$ be a non-negative value, then a HyperCertificate is a pair of functions $(\transclos, \barrier)$, where  $\transclos: \states \times \states \to \Reals$ is a closure certificate satisfying conditions \eqref{eq:cc_base} and \eqref{eq:cc_inv}, and  function $\barrier: \prodstate \times \states^v \times Q\to \Reals_{ \geq 0}$ are a HyperCertificate if
for all $x_{0,1} \in \initstates$, and 
for all $ z_{1} \in \{z \mid \transclos(x_{0,1}, z) \geq 0 \}$,
and all $x_1' \in f(x_1)$, and
\begin{itemize}
    \item [(if)] $\quant_2 = \exists$, 
then there exists $x_{0, 2} \in \initstates$,
\item [(else)] for all $x_{0, 2} \in \initstates$, and for all $\forall z_{2} \in \{z \mid \transclos(x_{0,2}, z) \geq 0 \}$, and 
\end{itemize}
\begin{align}
&& \vdots  && \nonumber  
\end{align}
\begin{itemize}
    \item [(if)] $\quant_p = \exists$, 
then there exists $x_{0, p} \in \initstates$ such that:
\item [(else)] for all $x_{0, p} \in \initstates$, and for all $\forall z_{p} \in \{z \mid \transclos(x_{0,p}, z) \geq 0 \}$:
\end{itemize}
\begin{align}
\text{ we have } \barrier(\bm{x}_0, \bm{z}, q_0) \geq 0, \label{eq:Hyper_Cert_base}
\end{align}
and $\bm{z} =(z_{j_1}, z_{j_2}, \ldots, z_{j_v})$.
And  $\forall x_1 \in \states$, and $\quant_1 x'_1 \in f(x_1)$, $\forall x_2 \in \states$, such that $\quant_2 x'_2 \in f(x_2)$, $\ldots$, $\forall x_p \in \states$, $\quant_p x'_p \in f(x_p)$ where $\quant_i = \forall$ if $i \in \indexforall$, and $\InPath(x_i, x'_i, z_i)$ is true or $\quant_i = \exists$ if $i \in \indexexists$ is such that:
\begin{itemize}
    \item  for all $q \in Q \cap Q_{Acc}$,
    \begin{align}
   &  
   \big( \barrier( \bm{x}, \bm{z}, q) \geq 0 \big) \implies \big( \barrier( \bm{x}', \bm{z}, q') \geq 0 \big) , \label{eq:Hyper_Cert_accept} \\
    \end{align}
    \item and for all $q \in Q \setminus Q_{Acc}$, 
    \begin{align}
   & \big( \barrier(\bm{x},  \bm{z}, q) \geq 0 \big) \implies \big(0 \leq \barrier( \bm{x}', \bm{z}, q') \big) \wedge  \nonumber \\
   &\big(\barrier( \bm{x}', \bm{z}, q') \leq \barrier(\bm{x}, \bm{z}, q) - \xi \big), \label{eq:Hyper_Cert_rank}
\end{align}
\end{itemize}
where $\bm{x}' = ( x'_1, x'_2, \ldots x'_p)$, and $q' \in \delta(q, (\lab(x_0), \ldots, \lab(x_p))$.
\end{definition}

We now show how HyperCertificates are useful for the verification of hyperproperties specified in HyperLTL.
\begin{theorem}
    \label{thm:Hyper_Cert}
    Consider a system $\system = (\states, \initstates, f)$ and a HyperLTL formula $ \phi =\quant_1 \pi_1, \quant_2 \pi_2, \ldots, \quant_p \pi_p \psi$.
    Let $\Aut = (\Sigma, Q, q_0, \delta, Q_{Acc})$ correspond to the quantifier-free formula $\psi$, and the labeling map $\lab: X \to \Sigma$ map each state to a finite label.
    Then the existence of functions $\transclos$ and $\barrier$ as in Definition \ref{def:def:Hyper_cert} ensures that $\system \models_{\lab} \phi$. 
\end{theorem}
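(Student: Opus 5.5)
Following the proofs of Theorem~\ref{thm:opaue} and Theorem~\ref{thm:acc_hyper_verif}, I would argue by contradiction via the game-based semantics. The quantifier prefix is unfolded player by player. Each time a universal quantifier $\quant_{j}$ with $j \in \indexforall$ is reached, the $\forall$-player commits to a full infinite state sequence $\langle x_{0,j}, x_{1,j}, \ldots \rangle$. Each time an existential quantifier $\quant_{\ell}$ with $\ell \in \indexexists$ is reached, we must exhibit a response. That response is read off from the witnesses in condition~\eqref{eq:Hyper_Cert_base} for the initial state and from the existential choices in conditions~\eqref{eq:Hyper_Cert_accept}--\eqref{eq:Hyper_Cert_rank} for successors. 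Concretely, I would define a strategy for the $\exists$-player that may depend on the prefix of already fixed traces together with lookahead states $\bm{z}$. For each universal trace, $z_{j}$ is chosen as a state $x_{n,j}$ on that trace, and $\transclos(x_{0,j}, x_{n,j}) \geq 0$ holds by Lemma~\ref{lemma:transclos}.

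The key steps are as follows.
\begin{enumerate}
\item Fix a horizon $n$ and set $z_j = x_{n,j}$ for every $j \in \indexforall$. This makes condition~\eqref{eq:Hyper_Cert_base} applicable, so it provides existential initial states with $\barrier(\bm{x}_0, \bm{z}, q_0) \geq 0$.
\item By Lemma~\ref{lemma:InPath}, every universal step $x'_j = x_{i+1,j}$ with $i < n$ satisfies $\InPath(x_{i,j}, x_{i+1,j}, z_j)$, since $\InPath(x_{0,j}, x_{i,j}, x_{n,j})$ holds along the path. The successor conditions therefore apply at every step up to $n$. They yield existential successors and an automaton run $q_0, q_1, \ldots$ with $\barrier(\bm{x}_i, \bm{z}, q_i) \geq 0$ for all $i \leq n$.
\item On any stretch of consecutive non-accepting automaton states, condition~\eqref{eq:Hyper_Cert_rank} decreases $\barrier$ by at least $\xi$ per step while keeping it nonnegative. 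Such a stretch therefore has length at most $\barrier(\bm{x}_{i}, \bm{z}, q_{i})/\xi$, and the run must revisit $Q_{Acc}$.
\item Letting the horizon grow, I would extract an infinite run that visits $Q_{Acc}$ infinitely often. This shows that the tuple of traces is accepted by $\Aut$, hence satisfies $\psi$, which contradicts the assumption that the $\forall$-player wins.
\end{enumerate}

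The main obstacle is step~4, the passage from a fixed finite lookahead $\bm{z}$ to infinite traces. The $\InPath$ guard only licenses steps up to the point where $z_j$ is reached, and the existential choices may change with $\bm{z}$. My first attempt would be a compactness or König's-lemma style argument: the automaton is finite, and at each depth one keeps only choices consistent with infinitely many horizons. This needs either finite branching or an extra assumption. Failing that, I would exploit the disjunct $y = z$ in $\InPath$, which lets us take $z_j$ to be a state that recurs on the trace, for instance in the lasso-shaped traces of the examples. Once the trace reaches $z_j$ it may then continue indefinitely under the same lookahead, and a single $\bm{z}$ suffices for an infinite play.

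A secondary difficulty is the quantifier alternation. An existential choice for $\pi_\ell$ may depend only on the traces $\pi_{j}$ with $j < \ell$ and their lookaheads. I would check that the nesting order in Definition~\ref{def:def:Hyper_cert}, in which the $z_j$ are introduced immediately after the $x_{0,j}$, respects exactly this dependency.
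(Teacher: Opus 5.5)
Your steps 1 and 2 are fine, but steps 3--4 contain a genuine gap, and it cannot be closed for Definition~\ref{def:def:Hyper_cert} as written. Within a finite horizon, step 3 forces no visit to $Q_{Acc}$. Condition~\eqref{eq:Hyper_Cert_base} only gives $\barrier(\bm{x}_0,\bm{z},q_0)\geq 0$, and this value may grow with the lookahead $\bm{z}=\bm{x}_n$, so a single non-accepting stretch can fill the whole horizon. Step 4 then has nothing to pass to the limit: K\"onig-style limits preserve safety invariants, not B\"uchi acceptance, and finite branching does not help.

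Concretely, let $\states=\Reals_{\geq 0}$, $\initstates=\{0\}$, $f(x)=\{x+1\}$, $\lab\equiv\emptyset$, and $\phi=\forall\pi_1\exists\pi_2\,\always\eventually a_{\pi_2}$. Use the two-state NBA that, on reading $(\sigma_1,\sigma_2)$, moves to the accepting state $q_1$ if $a\in\sigma_2$ and to $q_0$ otherwise. Then $\transclos(x,y)=y-x-1$ is a closure certificate, and the guard $\InPath(x_1,x_1+1,z_1)$ forces $z_1\geq x_1+1$. The choice $\barrier((x_1,x_2),z_1,q_0)=\max\{z_1-x_1,0\}$, $\barrier(\cdot,\cdot,q_1)=0$, $\xi=1$ satisfies every condition: under the guard, $\barrier$ drops by exactly $1$ and stays nonnegative. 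Yet $\phi$ is false.

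Your fallback is precisely the missing hypothesis. If $z_j$ is visited infinitely often by the $j$-th universal trace, the guard holds at every step and the ranking argument with one fixed $\bm{z}$ goes through, at least for $\forall^*\exists^*$ prefixes. Pigeonhole supplies such a $z_j$ in finite-state systems, but on a trace like $0,1,2,\ldots$ none exists.

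The paper's proof is your steps 1--3 with one fixed $\bm{z}$ over the infinite play. It applies \eqref{eq:Hyper_Cert_accept}--\eqref{eq:Hyper_Cert_rank} ``via induction'' at every step without checking that $\InPath(x_{i,j},x_{i+1,j},z_j)$ still holds after the trace has passed $z_j$. So you have correctly located a hole that the paper glosses over, and the theorem needs a stronger definition rather than a cleverer proof.

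Your secondary check on quantifier alternation would fail too. $\bm{z}$ is fixed outside the successor block, so an existential successor $x'_\ell$ may depend on $z_j$ for a universal $j>\ell$. For example, take $\states=\{s,A,N\}$, $\initstates=\{s\}$, $f(s)=\{A,N\}$, $A$ and $N$ absorbing, $\lab(A)=\{a\}$, $\lab(s)=\lab(N)=\emptyset$, and $\transclos(x,y)\geq 0$ iff $y$ is reachable from $x$. From $s$ the guard forces $x'_3=z_3$, so choosing $x'_2=z_3$ yields a certificate (with real-valued $\barrier$) for the false formula $\forall\pi_1\exists\pi_2\forall\pi_3\,\nex(a_{\pi_2}\leftrightarrow a_{\pi_3})$.
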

\begin{proof}
 Similarly to the proof of Theorem~\ref{thm:Hyper_Cert}, we prove this claim through contradiction. 
    Let us assume that there exists a choice of traces for the universal player such that no matter the choice of the existential player, the resulting set of traces is not accepted by the automaton representing the quantifier free part of the formula. 
    As mentioned previously, each player's choice of a trace is based on the order of quantification, i.e, if the quantifier at position $k$ is $\forall$, the universal player chooses this trace based on the previous $k-1$ choices made.
    Let the universal player's choices of starting states of the product system be $(x_{0,j_1}, x_{0,j_2} \ldots x_{0,j_v})$.
    Similarly, based on the lookahead state $\bm{z} = (z_{j_1}, \ldots, z_{j_v})$ where $\transclos(x_{0,j_i}, z_{j_i}) \geq 0$ for all $j_i \in \indexforall$, let the choice of initial states by the existential player be $(x_{0,l_1},x_{0,l_2} \ldots, x_{0,l_b})$.
    Then the initial state of the product will be $\bm{x}_0 = (x_{0,1} \ldots, x_{0,p})$ where $x_{0,i} = x_{0,j_k}$ when $i = j_k \in \indexforall$ and $x_{0,i} = x_{0,l_k}$ when $i = l_k \in \indexexists$ for all $1 \leq i \leq p$.
 Observe that the choices of the existential player are not made collectively after that of the universal player; rather, one selects the choices in the order of quantification.
    We connect the state sequences with the traces of the NBA $\Aut$ via the labeling function where the traces of a state sequence $\langle x_{0,i}, x_{1,i} \ldots \rangle$ give the trace $\langle \lab(x_{0,i}), \lab(x_{1,i}) \ldots$ for all $1 \leq i \leq p$.
 Then we know that NBA $\Aut$ does not accept the trace $\langle (\lab(x_{0,1}), \ldots, \lab(x_{0,p})), (\lab(x_{1,1}), \ldots, \lab(x_{1,p})), \ldots  \rangle$ and thus visits the accepting states $Q_{Acc}$ only finitely often. 
    Let $j \in \Nats$ be the index at which the product system stops visits to an accepting state of the NBA. 
    From condition~\eqref{eq:Hyper_Cert_base}, we know that for all $\bm{z} = (z_{j_1}, \ldots, z_{j_v})$ such that $\transclos(x_{0,j_i}, z_{j_i}) \geq 0$, we have $\barrier(\bm{x}_0, \bm{z}, q_0) \geq 0$.
 Following conditions \eqref{eq:Hyper_Cert_rank} and \eqref{eq:Hyper_Cert_accept} and via induction, we have $\barrier(\bm{x}_{(j+i)}, \bm{z}, q') \leq \barrier(\bm{x}_{j}, \bm{z}, q) - i \xi$ for all $q \in Q \setminus Q_{Acc}$ and all $i \in \Nats_{\geq 1}$.
    As $i$ goes to $\infty$ we have $\barrier(\bm{x}_{j_i}, \bm{z}, q') < 0$ which is a contradiction.
\end{proof}

We should note that in the above Definition, the conditions depend on the order of quantifier alternation.
However, given a HyperLTL formula, where $\quant_i$ are fixed, one can write specific conditions for HyperCertificates to verify formulae in this form.
Furthermore, we should remark that we do not allow the existential player to select their choice of the next state based on the entire current state of the product system. Doing so would allow information to ``leak'' from the choices made by the universal player in traces that succeed the existential player in the game-based semantics and would thus be invalid. 
We now discuss how one may automate the search for these certificates via approaches similar to those of barrier certificates and closure certificates.

\section{Synthesis of HyperCertificates}
\label{sec:computation}
We now present how one may employ existing automated approaches to synthesize HyperCertificates.
Specifically, we explore a Sum-of-Squares (SOS) \cite{Parrilo_2003} based formulation and a data-driven approach based on Linear Programming which we validate via satisfiability modulo theory solvers (SMT).

\subsection{Sum-of-Squares (SOS) Formulation}
\label{subsec:sos_comp}
For ease of exposition, we present how one can formulate the search for HyperCertificates  for the $\forall^* \exists^*$-fragment of HyperLTL. 
This approach can be readily extended to the general conditions presented in Section \ref{sec:gen}.
Let the HyperLTL formula be of the form $\phi = \forall \pi_1, \ldots, \forall \pi_v \exists \pi_{v+1} \ldots \exists \pi_p \psi$, and let NBA $\Aut = (\Sigma,Q, q_0, \delta, Q_{Acc})$ accept $\psi$.
A set $\mathcal{Z}\subseteq\mathbb{R}^n$ is semi-algebraic if there exists a polynomial vector $g_\mathcal{Z}$ such that $\mathcal{Z}=\{z\in\mathbb{R}^n\mid g_\mathcal{Z}(z) \geq 0\}$.
To synthesize certificates using an SOS approach, we assume that the state set $\states$, the set of initial states $\initstates$, and the relevant sets $\states_{\sigma} = \{ x \in \states \mid \lab(x) = \sigma \}$  are semi-algebraic sets \cite{bochnak2013real} for all $\sigma \in \Sigma$ defined with polynomial vectors $g(x)$, $g_0(x)$ and $g_{\sigma}(x)$, respectively.

Under this assumption, the sets $\states \times \states$, $\prodstate$, $\states^v$ and $\prodinitstate$ of the corresponding product system $\prodsys$ are also semi-algebraic.
Let these sets be defined with polynomial vectors $g_2$, $\bm{g}$, $g_v$, and $\bm{g}$, respectively. 
Finally, consider the set $\states^v \times \states^v$ to be defined with the help of the vector of polynomials $g_{2v}$. 
Next, we assume that the transition function is a polynomial in the state variable $x$, parameterize the functions $\transclos(x,y) = \sum_{l=0}^{l =z_1}t_l,\chi_l(x,y)$, and consider piece-wise functions $\barrier_q(\bm{x}, \bm{z}) = \sum_{l_q=0}^{l_q = z_2}c_{l,q}\upsilon_{q,l}(\bm{x}, \bm{z})$ for each automaton state $q \in Q$ where $t_l,c_{l,q}\in\mathbb{R}$ are unknown coefficients and $\chi_l: \states \times \states$ and $\upsilon_q: \prodstate \times \states^v $ are monomials. 

To illustrate how we handle implications, consider condition \eqref{eq:Hyper_Cert_FE_buchi_nonincrease}.
First, replace the implication $(\barrier(\bm{x}, \bm{z}, q ) \geq 0) \implies (\barrier( \bm{x}', \bm{z}, q') \geq 0 )$ with the sufficient condition $\barrier(\bm{x}', \bm{z}) \geq  \tau_0 \barrier(\bm{x}, \bm{z}  ) $ for some nonnegative bilinear value $\tau_0 \in \Reals_{\geq 0}$.
Now condition \eqref{eq:Hyper_Cert_FE_buchi_nonincrease} corresponds to the following implication:
\begin{align*}
&\big(g_v(\bm{x}_v) \geq 0 \big) \wedge \big( g_v(\bm{z}) \geq 0 \big) \wedge \big( \underset{1 \leq i \leq v}{\bigwedge} (x'_i \in f(x_i) ) \big) \wedge \\
& \underset{1 \leq i \leq v}{\bigwedge} \big(\transclos(x_i, x'_i) \geq 0) \wedge ( \transclos (x'_i, z_i) \geq 0 ) \\
& \qquad \vee (x'_i = z_i) \big) \implies \\
& \exists x_{v+1}, \ldots x_{p} \big( \underset{ \substack{q \in Q \cap Q_{Acc} \\ q' \in \delta(q, \lab(\bm{x}))}}{\bigwedge} ( \barrier(\bm{x}', \bm{z}) \geq  \tau_0 \barrier(\bm{x}, \bm{z}  ) \big),    
\end{align*}
where where $\bm{x_v} = (x_1, \ldots, x_v)$ denotes the set of states selected by the universal player, and $\bm{x} = (x_1, \ldots, x_p) \in \prodstate$ denotes the set of states selected by the players.
Observe, that the above implications consist of conjunctions and disjunctions of polynomial inequalities, (for the equalities such as $x'_i = z_i$, we can rewrite them as inequalities $z_i - x'_i \geq 0$, and $x'_i - z_i \geq 0$) except for the existential quantification of $x_{v+1} \ldots x_p$ and the nondeterministic choice for $x'_i \in f(x_i)$.
We now convert these existential quantifiers into universal quantifiers similar to \cite{jagtap_2019_formal} as follows:
\begin{align*}
&\big(\bm{g}_v(\bm{x}_v) \geq 0 \big) \wedge \big( g_v(\bm{z}) \geq 0 \big) \wedge \big( \underset{1 \leq i \leq v}{\bigwedge} (x'_i \in f(x_i) ) \big) \wedge \\
& \big( \underset{1 \leq i \leq v}{\bigwedge}(  (\transclos(x_i, x'_i) \geq 0) \wedge ( \transclos (x'_i, z_i) \geq 0  )) \\
& \qquad \vee (x'_i {=} z_i) \big) \implies \\
& \Big( \underset{ \substack{q \in Q \cap Q_{Acc} \\ q' \in \delta(q, \lab(\bm{x}))}}{\bigwedge}  \big( \barrier_{q'}(\bm{x}', \bm{z})  - \tau_0 \barrier_q(\bm{x}, \bm{z}  ) - \\
& \qquad \sum_{i = 1}^b (x_{v+i} - \sigma_i(\bm{x}_v, \bm{z}) ) g_{2v}( \bm{x}, \bm{z}) \geq 0) \big)  \Big),    
\end{align*}
Now, we split the disjunction in the antecedent to a set of conditions: one for each of $((\transclos(x_i, x'_i) \geq 0) \wedge ( \transclos (x'_i, z_i) \geq 0  ))$, $(x'_i = z_i) $, and $(x'_i = x_i) $, respectively, as: 
\begin{align*}
&\big(\bm{g}(\bm{x}) \geq 0 \big) \wedge \big( g_v(\bm{z}) \geq 0 \big) \wedge \big( \underset{1 \leq i \leq v}{\bigwedge} (x'_i \in f(x_i) ) \big) \wedge \big( V1 ) \big)  \\
& \implies \Big( \underset{ \substack{q \in Q \cap Q_{Acc} \\ q' \in \delta(q, \lab(\bm{x}))}}{\bigwedge}  \big( \barrier_{q'}(\bm{x}', \bm{z})  - \tau_0 \barrier_q(\bm{x}, \bm{z}  ) - \\
& \qquad \sum_{i = 1}^b (x_{v+i} - \sigma_i(\bm{x}_v, \bm{z}) ) g_{2v}( \bm{x}, \bm{z}) \geq 0) \big)  \Big),    
\end{align*}
where condition $V1$ is replaced by $((\transclos(x_i, x'_i) \geq 0) \wedge ( \transclos (x'_i, z_i) \geq 0  ))$, or $(x'_i = z_i) $, for each $1 \leq i \leq v$, respectively.
We adopt a similar strategy for nondeterminism: that is, we assume that there are at most $n_d$ choices for $x'_i \in f(x_i)$, that is, $|f(x_i)| \leq n_d$ for any $x_i \in \states$.
Thus, this condition above results in $(2+n_d)^v$ conditions in the SOS-formulation.
We should note that the choice of functions $\sigma_i$ must be the same SOS polynomial for all states $q \in Q$ as the choice of the next state of the existential player cannot depend on the universal player.
Now, that all conditions are of the form $\underset{1 \leq i \leq t}{\bigwedge}(\varsigma_i \geq 0) \implies (\varsigma_0 \geq 0) $ where $\varsigma_i$ are polynomials, we can use Putinar's positivstellensatz \cite{putinar1993positive} to reduce the difficulty of finding a HyperCertificate as in Definition \ref{def:Hyper_cert_FE} to synthesize appropriate SOS-polynomials similar to \cite{chatterjee2024sound}.
For ease of presentation, we present the details of the SOS formulation of HyperCertificates in the Appendix \ref{ap:SOS}.

\subsection{SMT-based CEGIS Approach}
\label{subsec:cegis}
Similarly to \cite{murali2024closure}, one may instead  adopt a counterexample-guided inductive synthesis (CEGIS) framework to synthesize the certificate functions $\transclos$ and $\barrier$. 
To do so, similar to the SOS-approach, we assume that the transition function is a polynomial in the state variable $x$, parameterize functions $\transclos(x,y) = \sum_{l=0}^{l =z_1}t_l,\chi_l(x,y)$, and consider piece-wise functions $\barrier_q(\bm{x}, \bm{z}) = \sum_{l_q=0}^{l_q = z_2}c_{l,q}\upsilon_{q,l}(\bm{x}, \bm{z})$ for each automaton state $q \in Q$ where $t_l,c_{l,q}\in\mathbb{R}$ are unknown coefficients and $\chi_l: \states \times \states$ and $\upsilon_q: \prodstate \times \states^v $ are monomials to act as HyperCertificates. 
We then sample a finite number of points and reduce the check of a certificate over these points to solving a bilinear program over these sampled points. 
We then try to falsify the above conditions in an SMT solver such as z3 \cite{moura_2008_z3}.
If we can falsify the conditions, then the SMT solver provides a model with a corresponding state that violates at least one of conditions  \eqref{eq:cc_base}, \eqref{eq:cc_inv}, \eqref{eq:Hyper_Cert_base}, \eqref{eq:Hyper_Cert_accept}, or\eqref{eq:Hyper_Cert_rank}.
We add the corresponding state to the respective condition and solve the bilinear program again.
We repeat this iteration until we fail to falsify our candidate HyperCertificate.

\section{Case Studies}
\label{sec:case_studies}
We evaluate our proposed framework on two discrete-time dynamical systems to demonstrate its effectiveness in certifying initial-state opacity using the synthesized functions $\transclos$ and $\barrier$.

\subsection{One-Dimensional Numerical System}

We consider a one-dimensional system with non-deterministic transitions defined as $\system = (\states, \initstates, \transfunc)$, where the state space is $\states = \Reals_{\geq 0}$ and the initial set is $\initstates = \{5, 6\}$. The transition function $\transfunc$ is given by:
\[
\transfunc(x) =
\begin{cases}
\lfloor x/2 \rfloor + 1, & \text{if } x \in \{5,6\}, \\
\lfloor x/2 \rfloor,     & \text{if } x \in \{5,6\}, \\
x/2,                     & \text{otherwise}.
\end{cases}
\]
The secret state set is $\secretstates = \{5\}$, and the non-secret states are $\states \setminus \secretstates$. Two states $(x_1, x_2)$ are indistinguishable to an external observer if $|x_1 - x_2| \leq 1$.

We apply our data-driven synthesis procedure using degree-2 polynomial templates for both $\transclos$ and $\barrier$. The resulting certificate functions are:
\begin{align*}
\transclos(x_1, x_3) = {} & 0.0028 + 0.0014 x_3 + 0.0028 x_1 + 0.0014 x_1 x_3 \\+ &0.0028 x_1^2, \\
\barrier(x_1, x_2, x_3) = {} & 0.0054 + 0.0016 x_3 + 0.0062 x_2 - 0.0006 x_2 x_3 \\ &- 0.0041 x_2^2 
                             + 0.0035 x_1 + 0.0004 x_1 x_3 +\\ &0.0008 x_1 x_2 + 0.0028 x_1^2,
\end{align*}
where $x_3$ denotes the value of the state of the system that may be reachable from state $x_1$, $x_1$ denotes the states that may be reachable from a secret initial state, and $x_2$ denotes the states that is reachable from an initial non-secret state.

\subsection{Vehicle Dynamics Model}
\label{subsec:case_studies_1}
We further evaluate our method on a three-dimensional vehicle model adapted from~\cite{anand2024verification}. The state is $(s, v, w)$, representing position, velocity, and acceleration, respectively. The position $s \in [0,8]$, velocity $v \in [0,0.6]$, and acceleration $w$ is chosen non-deterministically from $\{-0.04, -0.02, 0.0, 0.02, 0.04\}$. The system dynamics are:
\[
\begin{cases}
s' = s + v + \tfrac{1}{2} w, \\
v' = v + w, \\
w' \in \{-0.04, -0.02, 0.0, 0.02, 0.04\}.
\end{cases}
\]

The initial set is $\initstates = [0,8] \times [0,0.6] \times \{-0.04, -0.02, 0.0, 0.02, 0.04\}$, and the secret set is $\secretstates = [0,1] \times [0,0.6] \times \{-0.04, -0.02, 0.0, 0.02, 0.04\}$. The observer has access only to the velocity component, with finite precision $\delta = 0.01$. Thus, two states $(s_1, v_1, w_1)$ and $(s_2, v_2, w_2)$ are indistinguishable if $|v_1 - v_2| \leq 0.01$.
To synthesize a HyperCertificate as in Definitions \ref{def:acc_opac_1} and \ref{def:Hyper_cert_FE}, we first synthesize a template closure certificate $\transclos$. We then use this to synthesize a barrier certificate $\barrier$ rather than trying to synthesize both simultaneously.
We find a closure certificate and a barrier certificate of degree $3$ in $4$ hours on a machine with $64$GB of RAM running on an Intel i$7$-$12700$KF.
The HyperCertificate is presented in Appendix \ref{ap:case_studies_1}. 

\subsection{Finite-State Example}
\label{subsec:case_studies_2}
We evaluate our approach on the finite-state system illustrated in Figure~\ref{fig:abc_fails}, where the augmented barrier certificate method proposed in~\cite{anand2024verification} fails to certify opacity, as discussed in Section~\ref{sec:intro}. 

Since the reachable set of states is finite and explicitly known, we do not require the synthesis a closed form function $\transclos$. 
Instead, we focus on synthesizing a barrier function $\barrier$ using a sum-of-squares (SOS) optimization framework over a fixed polynomial template of degree $3$.
The certificate is present in Appendix \ref{ap:case_studies_2} and takes $5$ minutes on the reference machine.

\section{Conclusion}
\label{sec:concl}
We presented a notion of HyperCertificates to verify discrete-time dynamical systems against hyperproperties specified in HyperLTL.
Such certificates allow one to encode lookahead information and thus are more general than existing barrier certificate-based approaches to verify hyperproperties.
As future work, we plan to investigate whether one may be able to find not just sufficient but also necessary conditions for certain classes of systems and properties.
Another direction we leave for future work is to investigate how to reason about stochastic systems and to design correct-by-construction controllers to ensure the satisfaction of hyperproperties as well as the use of neural networks to represent such certificates.
We also plan to investigate the use of relational abstractions \cite{tamir2023counterexample} as well as prophecy variables \cite{finkbeiner2026universal} to reduce the burden of the existential quantifiers in our certificates.

\bibliographystyle{alpha}
\bibliography{references}
\section{Appendix}
\section{SOS-formulation of HyperCertificates}
\label{ap:SOS}
Functions $\transclos: \states \times \states \to \Reals$, and $\barrier_q: \prodstate \times \states^v \to \Reals$ are a HyperCertificate as in Definition \ref{def:Hyper_cert_FE} if the following polynomials are SOS:
\begin{align*}
& \transclos(x,f(x))-\lambda_{0}^T(x)g(x),\\
&\transclos(x,y) - \tau_1\transclos(f(x),y)-\lambda_{1}^T(x,y)g_{2}(x,y), \\
\end{align*}
where $\lambda_{0}, \lambda_1$ are SOS-polynomials and 
\begin{align*}
&\big( \underset{ \substack{q_0 \in Q_0}}{\bigwedge}  ( \barrier_{q_0}(\bm{x}, \bm{z}) - \sigma_{0,0}(\bm{x}) \bm{g}_0(\bm{x})  \\
& \qquad - \sum_{i = 1}^b (x_{v+i} - \sigma_{0,1,i}(\bm{x}_v, \bm{z}) ) g_{2v}(\bm{x}_v, \bm{z})  \\
& - \sigma_{0,2}(\bm{z}) g_v(\bm{z}) - (\bar{\transclos}(\bm{x}_v, \bm{z})- \sigma_{0,3}(\bm{x}_v, \bm{z}))
)  \big), \\
&\big( \underset{ \substack{q \in Q \cap Q_{Acc} \\ q' \in \delta(q, \lab(\bm{x}))}}{\bigwedge}  ( \barrier_{q'}(\bm{x}', \bm{z})  - \tau_0 \barrier_q(\bm{x}, \bm{z}  ) - \sigma_{1,0}(\bm{x}) \bm{g}(\bm{x})  \\
& \qquad - \sum_{i = 1}^b (x_{v+i} - \sigma_{1,1,i}(\bm{x}_v, \bm{z}) ) g_{2v}((\bm{x}_v, \bm{z}))  \\
& - \sigma_{1,2}(\bm{z}) g_v(\bm{z}) - (\bar{\transclos}(\bm{x}_v, \bm{z})- \sigma_{1,3}(\bm{x}_v, \bm{z}))
 \big), \\
 &\big( \underset{ \substack{q \in Q \setminus Q_{Acc} \\ q' \in \delta(q, \lab(\bm{x}))}}{\bigwedge}  ( -\barrier_{q'}(\bm{x}', \bm{z})  + \barrier_q(\bm{x}, \bm{z}  ) - \xi - \sigma_{2,0}(\bm{x}) \bm{g}(\bm{x})  \\
& \qquad - \sum_{i = 1}^b (x_{v+i} - \sigma_{2,1,i}(\bm{x}_v, \bm{z}) ) g_{2v}((\bm{x}_v, \bm{z}))  \\
& - \sigma_{2,2}(\bm{z}) g_v(\bm{z}) - (\bar{\transclos}(\bm{x}_v, \bm{z})- \sigma_{2,3}(\bm{x}_v, \bm{z}))
 \big), \\
\end{align*}
where $\sigma_i$ are SOS-polynomials and 
$\bm{x} = (x_1, \ldots, x_p) $ ,$\bm{x}_v = (x_1,\ldots, x_v)$, $\bm{x}_b = (x_{v+1}, \ldots, x_p) $, $\bm{z} = (z_1, \ldots, z_v)$ and $\bar{\transclos}(x,z) = \begin{bmatrix}
    \transclos(x_1, z_1) \\
    \vdots  \\
    \transclos(x_v, z_v)
\end{bmatrix}$.
Note that we do not write the conditions for the other disjuncts but they follow in a similar fashion.

\section{Certificates for Section \ref{subsec:case_studies_1}}
\label{ap:case_studies_1}
To find a HyperCertificate via SOS, we first fix all the bilinear variables introduced as $0.7$.
We then find a closure certificate $\transclos$ of degree $3$ to encode the lookahead as:

$\transclos(s_1, v_1, w_1, s_2, v_2, w_2) = 10.42650007326121s_1^3 - 21.54860310117765s_1^2v_1 - 1.4469345364567237 \cdot10^{-13}s_1^2s_2 + 2.977913297673249\cdot 10^{-12}s_1^2v_2 + 10.737604265802078s_1v_1^2 + 9.226696476061465\cdot 10^{-13}s_1v_1s_2 + 2.7234880953761896\cdot 10^{-13}s_1v_1v_2 + 122.33828608326392s_1s_2^2 - 43.97489181259062s_1s_2v_2 + 157.44807863391816s_1v_2^2 + 32.15010553145641v_1^3 - 1.248905975189414\cdot 10^{-12}v_1^2s_2 + 1.6799596999233592\cdot 10^{-12}v_1^2v_2 + 110.4487673809787v_1s_2^2 - 62.25703038659162v_1s_2v_2 + 157.99270121443507v_1v_2^2 - 4.91103699474163\cdot 10^{-14}s_2^3 - 1.140176430624521\cdot 10^{-13}s_2^2v_2 + 7.56539824866095\cdot 10^{-14}s_2v_2^2 + 9.711975097029537\cdot 10^{-14}v_2^3 + 23.57832640911535s_1^2 + 12.499071923340196s_1v_1 - 2.427967561124748\cdot 10^{-12}s_1s_2 + 1.3214277859142432\cdot 10^{-12}s_1v_2 + 90.04012900502701v^2 - 2.257591407727963\cdot 10^{-12}v_1s_2 + 1.0288597619648384\cdot 10^{-12}vv_2 + 235.75446721733394s_2^2 - 83.38002053098964s_2v_2 + 304.11679345617307v_2^2 + 54.19531613209792s_1 + 82.64012621498918v_1 - 2.0689705198013913e-12s_2 + 1.9874667353892596e-12v_2 + 103.87025070937855$.

We fix the above closure certificate, and then search for a function $\barrier$ of degree $3$.
As we use $w$ to denote the nondeterminism (either the existential or universal choice of the next state), we consider the function $\barrier(s_1,s_2, v_1,v_2, s_{lookahead}, v_{lookahead})$ and do not parameterize it with respect to $w$:
We find a degree $3$ certificate as follows:

$\barrier(s_1,s_2, v_1, v_2, s_{lookahead}, v_{lookahead}) =  0.0003 s_1^3 - 0.0026 s_1^2 v_1 + 0.0009 s_1^2 v_2 + 0.0049 s_1 v_1^2 - 0.0002 s_1 v_1 s_2 - 0.0040 s_1 v_1 v_2 - 0.0007 s_1 s_2 v_2 + 0.0034 s_1 v_2^2 - 0.0001 s_1 s_{lookahead} v_{lookahead} + 0.0004 v_1^3 + 0.0005 v_1^2 s_2 - 0.0003 v_1^2 v_2 - 0.0003 v_1 s_2^2 + 0.0015 v_1 s_2 v_2 - 0.0008 v_1 v_2^2  + 0.0001 v_1 s_{lookahead}^2 + 0.0002 v_1 s_{lookahead} v_{lookahead} + 0.0001 v_1 v_{lookahead}^2 + 0.0001 s_2^3 - 0.0004 s_2^2 v_2 - 0.0001 v_2 s_{lookahead} v_{lookahead} - 0.0028 s_1^2 + 0.0137 s_1 v_1 - 0.0024 s_1 v_2 - 0.0229 v_1^2 - 0.0021 v_1 s_2 + 0.0176 v_1 v_2 - 0.0011 s_2^2 + 0.0085 s_2 v_2 - 0.0173 v_2^2 + 1.3017\cdot 10^{-5} s_{lookahead} v_{lookahead} - 7.9592\cdot 10^{-6} v_{lookahead}^2 + 0.0025 s_1 - 0.0037 v_1 + 0.0003 s_2 - 0.0010 v_2 + 0.009$,
where $s_{lookahead}$ and $v_{lookahead}$ denotes the future states that may be reached from states $s_1$ and $v_1$, the states $s_1$, $v_1$ denote states that may be reachable from an initial secret state, and $s_2$, $v_2$ denote the states that are reachable from a non-secret initial state.

\section{Certificates for Section \ref{subsec:case_studies_2}}
\label{ap:case_studies_2}
As we assume the closure certificate is explicitly provided for the finite-state example, we only synthesize a barrier certificate. We find a degree three certificate for function $\barrier$ as :
\begin{align*}
\barrier(x_1,x_2,x_3) = {} & 0.04165 x_{1}^{3} - 0.14805 x_{1}^{2} x_{2} + 0.32226 x_{1}^{2} x_{3} \\
                           & + 0.39294 x_{1} x_{2}^{2} + 0.56647 x_{1} x_{2} x_{3} - 0.17507 x_{1} x_{3}^{2} \\
                           & - 0.38673 x_{2}^{3} + 0.38294 x_{2}^{2} x_{3} - 0.32596 x_{2} x_{3}^{2} -\\& 0.02524 x_{3}^{3} 
                            - 1.48182 x_{1}^{2} - 3.34472 x_{1} x_{2} \\&- 2.60942 x_{1} x_{3} 
                            + 3.21638 x_{2}^{2} - 1.00687 x_{2} x_{3} \\&- 1.37964 x_{3}^{2} 
                            + 0.7789 x_{1} - 0.20299 x_{2} \\&- 0.96456 x_{3} - 0.59348
\end{align*}
where $x_3$ denotes the lookahead, $x_1$ the states that may be reachable from the secret state, and $x_2$ the states from the non-secret state.

\end{document}